\newcommand{\M}{\mathcal{M}}
\newcommand{\N}{\mathcal{N}}
\newcommand{\B}{\mathsf{[B]}}
\newcommand{\lr}[1]{\langle #1 \rangle}
\newcommand{\LL}{\mathcal{L}}
\newcommand{\K}{\mathsf{[K]}}
\newcommand{\TBM}{[\mathsf{tB}^\mathtt{MS}]}
\newcommand{\TBMF}{[\mathsf{tB}^\mathtt{MS}_\mathtt{FS}]}
\newcommand{\KM}{[\mathsf{K}^\mathtt{MS}]}
\newcommand{\KMF}{[\mathsf{K}^\mathtt{MS}_\mathtt{FS}]}
\newcommand{\DK}{\langle \mathsf{K} \rangle}
\newcommand{\DB}{\langle \mathsf{B} \rangle}
\newcommand{\KWH}{[\mathsf{K}_\mathtt{wh}]}
\newcommand{\RB}{R_\mathsf{B}}
\newtheorem{theorem}{Theorem}[section]
\newtheorem{remark}[theorem]{Remark}
\newtheorem{lemma}[theorem]{Lemma}
\newtheorem{definition}[theorem]{Definition}
\newtheorem{proposition}[theorem]{Proposition}
\newtheorem{example}[theorem]{Example}
\newenvironment{proof} {\textsc{Proof.}\quad} {\hfill $\Box$\\}
\title{Knowledge-wh and False Belief Sensitivity:\\
A Logical Study 
(An Extended Abstract)}
\author{Yuanzhe Yang
\institute{Department of Philosophy and Religious Studies\\
Peking University\\
Beijing, China}
\email{1900014924@pku.edu.cn}
}
\begin{document}
\maketitle

\begin{abstract}
In epistemic logic, a way to deal with knowledge-wh is to interpret them as a kind of mention-some knowledge (MS-knowledge).
But philosophers and linguists have challenged both the sufficiency and necessity of such an account:
some argue that knowledge-wh has, in addition to MS-knowledge, also a sensitivity to false belief (FS);
others argue that knowledge-wh might only imply mention-some true belief (MS-true belief).
In this paper, we offer a logical study for all these different accounts.
We apply the technique of bundled operators, and introduce four different bundled operators - 
$\TBM^x \phi := \exists x (\B \phi \wedge \phi)$,
$\TBMF^x \phi := \exists x (\B \phi \wedge \phi) \wedge \forall x (\B \phi \to \phi)$,
$\KM^x \phi := \exists x \K \phi$ and
$\KMF^x \phi := \exists x \K \phi \wedge \forall x (\B \phi \to \phi)$ -,
which characterize the notions of MS-true belief, MS-true belief with FS, MS-knowledge and MS-knowledge with FS respectively.
We axiomatize the four logics which take the above operators (as well as $\K$) as primitive modalities on the class of $S4.2$-constant-domain models,
and compare the patterns of reasoning in the obtained logics, in order to show how the four accounts of knowledge-wh differ from each other, as well as what they have in common.
\end{abstract}

\section{Introduction}
In standard epistemic logic, for the most time, we deal with \emph{propositional knowledge} (or knowledge-\emph{that}):
that is, an agent knows \emph{that} $\phi$, where $\phi$ is a certain proposition.
However, this clearly does not exhaust our daily use of the notion of ``knowledge''.
Besides knowledge-that, we also frequently talk about various kinds of \emph{knowledge-wh}:
for example, I know \emph{how} to ride a bike, I know \emph{who} proved the incompleteness theorems, 
I know \emph{when} a certain meeting is held, I know \emph{where} to buy a certain book, I know \emph{what} is the password of my computer, I know \emph{why} a certain event happens, etc.

Thus, besides standard propositional knowledge, knowledge-wh also seems to be an interesting subject for epistemic logic to study.
There are already a number of logical studies of various kinds of knowledge-wh (e.g.\ know whether in \cite{DBLP:journals/rsl/FanWD15}, know why in \cite{xu2021logic}, know how in \cite{wang2015logic},\cite{DBLP:conf/ijcai/FervariHLW17}, \cite{DBLP:conf/icla/LiW17}, \cite{Naumov2018-NAUTWK} and \cite{DBLP:journals/synthese/Wang18}, just to name a few),
and a more general framework for logics of knowledge-wh is also proposed in \cite{Wang2017-WANANF}.
In this paper, following \cite{Wang2017-WANANF}, we will also focus mainly on the general logical structures shared by various kinds of knowledge-wh.

As suggested in \cite{Wang2017-WANANF} (following the philosophical stance of the so-called ``intellectualism'' initiated in \cite{Stanley2001-WILKHV}), in many cases, knowledge-wh can be interpreted as a kind of \emph{mention-some knowledge} (MS-knowledge for short):
for example, I know how to prove a theorem, iff \emph{there exists} some proof such that I know that this proof is a proof for the theorem;
I know where to buy newspapers, iff \emph{there exists} some place where I know I can buy newspapers, etc.
Then, in such cases, it seems that the logical structure of knowledge-wh can be formally captured by the first-order modal formula $\exists x \K \phi(x)$.\footnote{
However, as it is also noted in \cite{Wang2017-WANANF}, in some other situations, it seems more natural to interpret knowledge-wh in terms of mention-\emph{all}, rather than mention-\emph{some}, knowledge.
For example, when I say ``I know who came to the meeting yesterday'', it may mean that I know \emph{all} the people who came to that meeting, which should probably be formalized as, for example, $\forall x (\phi(x) \to \K \phi(x))$ or $\forall x (\K \phi(x) \vee \K \neg \phi(x))$. 
We will not deal with the mention-all reading of knowledge-wh in this paper, since the behavior of mention-all knowledge is rather different from mention-some knowledge, and it thus seems better to study it independently elsewhere.

In fact, axiomatization of mention-all knowledge in terms of $\forall x (\K \phi(x) \vee \K \neg \phi(x))$ has been studied in \cite{Zhougrad}, an unpublished undergraduate thesis.
}

However, while it is quite clear that in many situations, knowledge-wh does involve some kind of mention-some structure, it is not as clear whether MS-knowledge really is the right account for knowledge-wh in these situations.
In fact, both the sufficiency and necessity of such an account are challenged.

For example,
as it is argued in \cite{George2013-GEOKMR}, \cite{Phillips2018-PHIKWA}, \cite{Harris2019-HARKAF} and \cite{xiang2016complete},
knowledge-wh may not only involve mention-some knowledge, but also involve \emph{false belief sensitivity} (FS for short).
Let's consider the following scenario, adapted from one offered in \cite{George2013-GEOKMR}, to illustrate this point.
\begin{example}\label{news}
There are two stores, Newstopia and Paperworld.
Newstopia sells newspapers, while Paperworld sells only stationery.
Now, Alice \emph{knows} that Newstopia sells newspapers,
but also \emph{believes} erroneously that Paperworld sells newspapers.
\end{example}
In such a scenario, it is natural to judge that that Alice does not know where to buy newspapers (psychological experiments conducted in \cite{Phillips2018-PHIKWA} also show that such an intuition is shared by many people):
even though she has a MS-knowledge concerning where to buy newspapers,
it seems that her false belief that Paperworld sells newspapers would corrupt her knowledge-where.

Hence, maybe knowledge-wh should be sensitive to false belief:
that is, even under an MS-reading, maybe MS-knowledge should not be characterized by $\exists x \K \phi(x)$ alone, 
but should rather be characterized by $\exists x \K \phi(x) \wedge \forall x (\B \phi(x) \to \phi(x))$.

On the other hand, the necessity of the MS-knowledge account is also doubted.
For example, as it is argued in \cite{Carter2015-CARKAE-10}, it seems that knowledge-wh is subject to a kind of \emph{epistemic luck} which is not consistent with propositional knowledge.
Let's consider the following scenario, adapted from one offered in \cite{Carter2015-CARKAE-10}, to illustrate this point.
\begin{example}
Suppose that Bob believes that $w$ is a way to change light bulbs, and $w$ is indeed a reliable way to do so.
His belief is obtained by reading an instruction in a book.
However, unknown to him, all other contents in the book are erroneous, and it is merely due to a very rare print error that the instruction he read is correct.
\end{example}
In this case, Bob's true belief that $w$ is a way to change light bulbs is too lucky to be counted as his \emph{knowledge};
but nevertheless, it still seems natural to judge that Bob knows how to change light bulbs.

Then, it seems that sometimes a mention-some true belief (MS-true belief for short), i.e. $\exists x (\B \phi(x) \wedge \phi(x))$, is enough for knowledge-wh.
(In philosophical discussions, such a stance is sometimes called ``revisionary intellectualism'', 
which is first proposed in \cite{Cath2015-CATRIA}, in contrast to intellectualism.)

Of course, none of the arguments presented above is decisive.
But they do reveal an enormous complexity in the question concerning the nature of knowledge-wh.
Hitherto, no consensus concerning this question is reached in philosophical discussions,
and nor will we offer a determinate answer here.
On the contrary, in this paper, we will study \emph{all} the accounts mentioned above in a formal way.

In order to do so, we apply the technique of ``bundled operators''\footnote{
For a detailed introduction of such an idea, see \cite{Wang2017-WANANF} and \cite{Wang2018-WANBKT-2}.
}.
The general idea is that we pack a complex first-order modal formula (e.g.\ $\exists x \K \phi(x) \wedge \forall x (\B \phi(x) \to \phi(x))$) into the semantics of a single operator,
and study the logic which takes such an ``bundled operator'' as primitive modality.
By working in such languages with limited expressivity, we can focus on the behavior of the epistemic notion in which we are really interested, 
without being distracted by irrelevant notions which can also be expressed in a stronger language.
Moreover, with the help of bundled operators, we can study the complex notions in a compact manner.

In this paper, then, we will study the following four different bundled operators:\footnote{
The bundled operator $\KM^x$ is first introduced in \cite{Wang2017-WANANF} (the notation used there is $\Box^x$, though);
later, further study concerning its decidability and complexity is presented in \cite{DBLP:conf/fsttcs/PadmanabhaRW18}, and axiomatization in \cite{Wang2021-WANCTF-3}.
The result presented in this paper concerning this operator (namely, the axiomatization on $S4.2$), however, is new.

On the other hand, $\TBM$, $\TBMF$ and $\KMF$ are all novel bundled operators that have not yet been studied in literature.
}
\begin{center}
\begin{tabular}{lcl}
$\TBM^x \phi(x)$ & $:=$ & $\exists x (\B \phi(x) \wedge \phi(x))$\\
$\TBMF^x \phi(x)$ & $:=$ & $\exists x (\B \phi(x) \wedge \phi(x)) \wedge \forall x (\B \phi(x) \to \phi(x))$\\
$\KM^x \phi(x)$ & $:=$ & $\exists x \K \phi(x)$\\
$\KMF^x \phi(x)$ & $:=$ & $\exists x \K \phi(x) \wedge \forall x (\B \phi(x) \to \phi(x))$.
\end{tabular}
\end{center}

We will axiomatize the logics which take these operators plus an operator for propositional knowledge as primitive modalities on the class of $S4.2$-models, a class of models which characterizes knowledge, belief and their interactions in a reasonable way.
Completeness results will also be presented.
Moreover, we will compare the obtained logics, 
in order to show the differences and commonalities in the ways we reason about knowledge-wh, propositional knowledge and belief, which are logically implied by the different accounts of knowledge-wh.

\section{First-order $S4.2$-models}
First, we introduce the models we use to characterize knowledge and belief on the semantic level.

Since first-order quantifiers are involved in the notions of MS-knowledge, MS-true belief and FS,
we will use \emph{first-order Kripke models} as the semantic basis.
We fix a set of predicates $\mathcal{P}$.
A first-order Kripke model, then, is defined as follow:\footnote{
In this paper, we will not introduce function symbols and constants to our language.
Hence, we will also not consider functions and constants in the following definition.
}

\begin{definition}
A \emph{first-order Kripke model} is a $4$-tuple $\M = (W, D, R, \rho)$, where
\begin{itemize}
\item $W \neq \emptyset$ is the set of epistemically possible worlds of the model;
\item $D \neq \emptyset$ is the domain of the model;
\item $R \subseteq W^2$ is the accessibility relation among the possible worlds, which characterizes epistemic indistinguishability;
\item $\rho: \mathcal{P} \times W \to \wp (D^{<\omega})$ assigns each $n$-ary predicate an $n$-ary relation on each possible world.
\end{itemize}
(We may abbreviate the term ``first-order Kripke model'' simply as ``model'' in the following discussions.)
\end{definition}

Note that such a model can be interpreted rather freely on the conceptual level, so that it can characterize various kinds of knowledge-wh.
For example, if we want to characterize the knowledge-how of an agent, then the elements in $D$ can be interpreted as different methods or devices available for the agent in question,
and a predicate $P \in \mathcal{P}$ can be interpreted as a certain goal, while $a \in \rho(P, w)$ reads ``at the epistemically possible world $w$, $a$ is a way to achieve $P$''.
Similarly, if we want to characterize knowledge-where, then the elements in $D$ can be interpreted as different locations accessible for the agent,
while predicates in $\mathcal{P}$ are interpreted as properties of these locations.
Of course, in a similar fashion, different models can also be used to characterize knowledge-who, knowledge-when or knowledge-what.

Also note that we only consider \emph{constant-domain} models here:
all possible worlds in a model share the same domain.
This is mainly in order to avoid technical and conceptual complexities,
and we believe this is indeed a reasonable (though inevitably idealized) assumption.

Of course, since we use first-order Kripke models to characterize the epistemic states of an agent, the Kripkean part of such models should also possess certain frame properties.

It is a popular choice to use $S5$-models to characterize an agent's knowledge, but we will not use such models in this paper.
This is mainly because we need to deal with both knowledge and belief, as well as the interactions between them (moreover, in our discussion, the notion of belief should be interpreted in a rather strong sense, so we would prefer interaction principles like $\B \phi \to \B \K \phi$ to hold),
and we must also allow the possibility for false belief, in order for the notion of FS to make any sense at all.
This, however, seems to be a difficult task when knowledge is characterized by $S5$-models.

Hence, we will use \emph{$S4.2$-models} instead - 
that is, models which are reflexive, transitive and strongly convergent.\footnote{
Here, we use the notion of \emph{strong} convergence to define $S4.2$-models;
but elsewhere, when defining $S4.2$-models,
the notion of \emph{weak} convergence might be used instead (A frame $(W, R)$ is \emph{weakly} convergent, iff for all $w, v, v' \in W$ s.t. $wRv$ and $wRv'$, there is some $u \in W$ s.t. $vRu$ and $v'Ru$).
Standard modal logic cannot distinguish these two kinds of models (as noted in \cite{Stalnaker2006-STAOLO-2}), but some of the languages studied in this paper are strong enough to distinguish them.

We choose the stronger notion of convergence here, because it seems more favorable both technically and conceptually.
This is also in accordance with Stalnaker's note in \cite{Stalnaker2006-STAOLO-2}.
}
The formal definition is as follow:

\begin{definition}
A frame $(W, R)$ is \emph{strongly convergent}, iff for all $w \in W$, there is some $u \in W$ s.t. for all $v \in W$, if $wRv$, then $vRu$.

A model based on a reflexive, transitive and strongly convergent frame is called an $S4.2$-model.
\end{definition}

We find such models attractive, because the class of $S4.2$-models validates the logic of knowledge $\mathbf{S4.2}$, in which belief can be reasonably \emph{defined} in terms of knowledge by the definition $\B \phi := \DK \K \phi$ (as explained in \cite{Lenzen1978-LENRWI}, the underlying idea is that, if one knows that she does not know something, then she would not believe it; and if she does not believe something, then she would know by introspection that she does not know it).
Moreover, the logic for the belief defined in this way is $\mathbf{KD45}$, and we also have many intuitive interaction principles between knowledge and belief (e.g.\ $\K \phi \to \B \phi$, $\B \phi \to \K \B \phi$, $\neg \B \phi \to \K \neg \B \phi$, $\B \phi \to \B \K \phi$).
(It is Lenzen who first proposed $\mathbf{S4.2}$ as a logic for knowledge in \cite{Lenzen1978-LENRWI} and \cite{Lenzen1979-LENEBZ-2}, from a syntatic perspective.
Later, Stalnaker also studied $\mathbf{S4.2}$ from a more semantic perspective in \cite{Stalnaker2006-STAOLO-2}.)

Moreover, as it is noted by Stalnaker in \cite{Stalnaker2006-STAOLO-2}, in an $S4.2$-frame $(W, R)$, we can define the following relation $\RB$, which corresponds to the notion of belief defined in terms of knowledge:
\begin{definition}
Given a frame $(W, R)$,
$\RB \subseteq W^2$ is the relation which satisfies that for all $w, u \in W$, $w \RB u$ iff for all $v \in W$ s.t. $wRv$, $vRu$.
\end{definition}

It is not hard to check that if $(W, R)$ is an $S4.2$-frame, then $(W, \RB)$ is $KD45$.
Moreover, after we formally introduce the languages and their semantics,
it will be easy to check that $\RB$ corresponds to $\B$
in exactly the way $R$ corresponds to $\K$.

\section{Languages and semantics}
Now, we introduce the languages for the bundled operators, as well as their exact semantics.

We first fix a set of variables $\mathbf{X}$.
Then, for any $\KWH \in \{\TBM, \TBMF, \KM, \KMF\}$,
the corresponding language $\LL(\KWH)$ (and also $\LL_\approx(\KWH)$) is defined as follow:

\begin{definition}
$\LL(\KWH)$-formulas are defined recursively as follow:
$$\phi ::= P(y_1, ..., y_n) \mid \neg \phi \mid \phi \wedge \phi \mid \K \phi \mid \KWH^x \phi$$
where $P \in \mathcal{P}$, $n \geq 0$ and $x, y_1, ..., y_n \in \mathbf{X}$.

$\neg \K \neg \phi$ is denoted as $\DK \phi$;
$\B \phi$ is an abbreviation for $\DK \K \phi$.

$\vee$, $\to$ and $\leftrightarrow$ are defined in the usual way.

Moreover, let $\LL_\approx(\KWH)$ be the language obtained by further adding an identity relation $\approx$ to $\LL(\KWH)$.\footnote{
In the following discussions, we will be working in the language $\LL(\KWH)$ when we do not specifically mention the language in which we are working.
We will make it clear whenever we switch our working language to $\LL_\approx(\KWH)$.
}
\end{definition}

Corresponding to our definition of the bundled operators,
we define the semantics for the above languages recursively as follow:

\begin{definition}
Given a model $\M = (W, D, R, \rho)$, a $w \in W$ and an assignment $\sigma$ from $\mathbf{X}$ to $D$:
\begin{center}
\begin{tabular}{|lcl|}
\hline
$\M,w,\sigma \vDash P(x_1, ..., x_n) $ &  $\iff$ & $  (\sigma(x_1), ..., \sigma(x_n)) \in \rho(P,w)$\\
$\M,w,\sigma \vDash x \approx y$ & $\iff$ & $\sigma(x) = \sigma(y)$\\
$\M,w,\sigma \vDash \neg \phi $ & $\iff$ & $ \M,w,\sigma \not \vDash \phi$\\
$\M,w,\sigma \vDash \phi \wedge \psi $ & $\iff$ & $ \M,w,\sigma \vDash \phi$ and $\M,w,\sigma \vDash \phi$\\
\vspace{0.5em}
$\M,w,\sigma \vDash \K \phi$ & $\iff$ & For all $v \in W$, if $wRv$, then $\M,v,\sigma \vDash \phi$\\
\hline
$\M,w,\sigma \vDash \TBM^x \phi$ & $\iff$ & There is some $a \in D$, s.t. $\M,w,\sigma[x \mapsto a] \vDash \B \phi \wedge \phi$\\
\hline
\multirow{2}{*}{$\M,w,\sigma \vDash \TBMF^x \phi$} & \multirow{2}{*}{$\iff$} & (i) There is some $a \in D$, s.t. $\M,w,\sigma[x \mapsto a] \vDash \B \phi \wedge \phi$\\
&& (ii) For all $b \in D$, $\M,w,\sigma[x \mapsto b] \vDash \B \phi \to \phi$\\
\hline
$\M,w,\sigma \vDash \KM^x \phi$ & $\iff$ & There is some $a \in D$, s.t. $\M,w,\sigma[x \mapsto a] \vDash \K \phi$\\
\hline
\multirow{2}{*}{$\M,w,\sigma \vDash \KMF^x \phi$} & \multirow{2}{*}{$\iff$} & (i) There is some $a \in D$, s.t. $\M,w,\sigma[x \mapsto a] \vDash \K \phi$\\
&& (ii) For all $b \in D$, $\M,w,\sigma[x \mapsto b] \vDash \B \phi \to \phi$\\
\hline
\end{tabular}
\end{center}
where $\sigma[x \mapsto a]$ is the assignment which maps $x$ to $a$, and agrees with $\sigma$ on any other point.
\end{definition}

Note that we need not introduce an independent operator for belief, since $\B \phi$ is already defined by $\DK \K \phi$ in the languages given above.
It is also not hard to check that on $S4.2$-models, the semantics for $\B \phi$ defined this way is indeed the following one:
\begin{center}
\begin{tabular}{|c|}
\hline
$\M,w,\sigma \vDash \B \phi \iff$ For all $v \in W$, if $w \RB v$, then $\M,v,\sigma \vDash \phi$ \\
\hline
\end{tabular}
\end{center}

\section{The logics}
Then, we introduce the four logics, corresponding to our four accounts of knowledge-wh respectively.
Their axiomatizations are all obtained in a similar fashion:
generally speaking, we start from a standard  $\mathbf{S4.2}$ system for $\K$,
and then add axioms and rules to describe the behaviors of the bundled operators.

Below is a list of schemas of axioms and rules that will be used to offer the axiomatizations (in which the operator $\KWH$ should be substituted by $\TBM$, $\TBMF$, $\KM$ or $\KMF$ in the corresponding logics):\footnote{
Note that when we use the notation $\phi[y/x]$ to denote the formula obtained by replacing every free occurrences of $x$ in $\phi$ with $y$, we also implicitly assume that $y$ is \emph{admissible} for $x$ in $\phi$:
that is, $y$ does not appear in the scope of any operator of the form $\KWH^y$ in $\phi$.
}
\begin{center}
\renewcommand\arraystretch{1.1}
\begin{tabular}{l|l||l|l}
\multicolumn{4}{c}{\textbf{Axioms}} \\
$\mathtt{TBtoK_{wh}}$ &  $(\B \phi \wedge \phi)[y/x] \to \KWH^x \phi$ &
$\mathtt{KtoK_{wh}}$ & $\K \phi[y/x] \to \KWH^x \phi$\\
\vspace{0.5em}
$\mathtt{K_{wh}toFS}$ & $\KWH^x \phi \to (\B \phi \to \phi) [y / x]$
& $\mathtt{BtoBK_{wh}}$ & $\B \phi[y/x] \to \B \KWH^x \phi$\\
\multicolumn{4}{c}{\textbf{Rules}}\\
\multirow{2}{*}{$\mathtt{K_{wh}toTB}$} & \multicolumn{3}{l}{\multirow{2}{*}{ $\displaystyle{\frac{\vdash \psi_0 \to \K(\psi_1 \to \cdots \K (\psi_n \to \neg (\B \phi \wedge \phi)) \cdots )}
{\vdash \psi_0 \to \K(\psi_1 \to \cdots \K (\psi_n \to \neg \KWH^x \phi) \cdots )}}$}}\\\\
\multirow{2}{*}{$\mathtt{K_{wh}toK}$} & \multicolumn{3}{l}{\multirow{2}{*}{ $\displaystyle{\frac{\vdash \psi_0 \to \K(\psi_1 \to \cdots \K (\psi_n \to \neg \K \phi) \cdots )}
{\vdash \psi_0 \to \K(\psi_1 \to \cdots \K (\psi_n \to \neg \KWH^x \phi) \cdots )}}$}}\\\\
\multirow{2}{*}{$\mathtt{FS\&BtoK_{wh}}$} & \multicolumn{3}{l}{\multirow{2}{*}{ $\displaystyle{\frac{\vdash \psi_0 \to \K (\psi_1 \to \cdots \K (\psi_n \to (\B \phi \to \phi)) \cdots )}
{\vdash \psi_0 \to \K (\psi_1 \to \cdots \K (\psi_n \to (\B \phi [y/x] \to \KWH^x \phi)) \cdots )}} \quad$}}\\\\
\multirow{2}{*}{$\mathtt{FS\&KtoK_{wh}}$} & \multicolumn{3}{l}{\multirow{2}{*}{ $\displaystyle{\frac{\vdash \psi_0 \to \K (\psi_1 \to \cdots \K (\psi_n \to (\B \phi \to \phi)) \cdots )}
{\vdash \psi_0 \to \K (\psi_1 \to \cdots \K (\psi_n \to (\K \phi [y/x] \to \KWH^x \phi)) \cdots )}}$}}\\\\
\multicolumn{4}{c}{(In all the rules above, $n$ can be any natural number, and we require that $x \notin \bigcup_{i \leq n} FV(\psi_i)$)}\\
\end{tabular}
\vspace{0.5em}
\end{center}

By using rules like $\mathtt{K_{wh}toTB}$ or $\mathtt{FS\&KtoK_{wh}}$, we have sacrificed some intuitiveness for technical reasons, but the underlying idea is straightforward:
for example, $\mathtt{K_{wh}toTB}$ essentially says $\KWH^x \phi \to \exists x (\B \phi \wedge \phi)$, and $\mathtt{FS\&KtoK_{wh}}$ says $\forall x (\B \phi \to \phi) \wedge \K \phi[y/x] \to \KWH^x \phi$, in languages where the existential and universal quantifiers are not available.

With the help of the above axioms and rules, then, we can give the following four logics:
\begin{center}
\begin{tabular}{l|l}
$\mathbf{S4.2}^{\TBM}$ & $\mathbf{S4.2}^\K \oplus \{\mathtt{TBtoK_{wh}}, \mathtt{K_{wh}toTB}\}$\\
$\mathbf{S4.2}^{\TBMF}$ & $\mathbf{S4.2}^\K \oplus \{\mathtt{K_{wh}toFS}, \mathtt{BtoBK_{wh}},  \mathtt{K_{wh}toTB}, \mathtt{FS\&BtoK_{wh}}\}$\\
$\mathbf{S4.2}^{\KM}$ & $\mathbf{S4.2}^\K \oplus \{\mathtt{KtoK_{wh}}, \mathtt{K_{wh}toK}\}$\\
$\mathbf{S4.2}^{\KMF}$ & $\mathbf{S4.2}^\K \oplus \{\mathtt{K_{wh}toFS}, \mathtt{BtoBK_{wh}},  \mathtt{K_{wh}toK}, \mathtt{FS\&KtoK_{wh}}\}$
\end{tabular}
\end{center}

Moreover, for any $\KWH \in \{\TBM, \TBMF, \KM, \KMF\}$, when we work in the language $\LL_\approx(\KWH)$, let $\mathbf{S4.2}_\approx^{\KWH}$ be the logic defined as follows:

\begin{center}
\begin{tabular}{l|l}
$\mathbf{S4.2}_\approx^{\KWH}$ & $\mathbf{S4.2}^{\KWH} \oplus \{x \approx x,\; x \approx y \to (\phi[x/z] \to \phi[y/z]),\; x \not \approx y \to \K (x \not \approx y)\}$
\end{tabular}
\end{center}

Note that all the logics given here are non-normal, since they are all non-aggregative:
that is, $\KWH^x \phi \wedge \KWH^x \psi \to \KWH^x(\phi \wedge \psi)$ is not an inner theorem of $\mathbf{S4.2}^{\KWH}$ (or $\mathbf{S4.2}_\approx^{\KWH}$) for any $\KWH \in \{\TBM, \TBMF,\\ \KM, \KMF\}$
(in fact, in all these logics, $\KWH^x Px \wedge \KWH^x \neg Px$ is consistent).
Moreover, some of the logics are even non-monotone, as we will see below.

Then, we show the completeness theorem for these logics.

Since we are now dealing with bundled operators with more complex structures, the strategy to prove completeness theorems for the case of $\KM$ in \cite{Wang2017-WANANF} and \cite{Wang2021-WANCTF-3} cannot be directly applied here (moreover, axiomatization of the logic of $\KM$ on $S4.2$ has also not yet been studied).
Hence, we will develop a new strategy to prove completeness theorems for all the above logics in a uniform way.

\begin{theorem}\label{completeness}
$\mathbf{S4.2}^{\KWH}$ (as well as $\mathbf{S4.2}_\approx^{\KWH}$) is sound and strongly complete w.r.t.\ the class of $S4.2$-constant-domain models,
where $\KWH \in \{\TBM, \TBMF, \KM, \KMF\}$.
\end{theorem}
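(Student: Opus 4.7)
\textbf{Soundness} is straightforward: each axiom is a direct reading of the semantic clauses, and the rules are sound because the side condition $x \notin \bigcup_{i \leq n} FV(\psi_i)$ lets one substitute any domain element for $x$ inside the nested $\K$-context without affecting the guards $\psi_i$, which is exactly what the implicit $\exists x / \forall x$ quantifiers in the bundled operators require. For \textbf{strong completeness} my plan is a Henkin-style canonical-model construction, adapted uniformly to all four bundled operators. First, extend $\mathbf{X}$ by a countably infinite reservoir $\mathbf{X}^+$ of fresh witness variables, take the canonical domain $D^c := \mathbf{X} \cup \mathbf{X}^+$, and let the canonical assignment $\sigma^c$ map every variable to itself.

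The heart of the argument is a Lindenbaum-style \emph{saturation lemma}: starting from a consistent $\Sigma$, I would enumerate in stages both ordinary formulas and \emph{existential demands} of the shape $\psi_0 \wedge \DK(\psi_1 \wedge \cdots \DK(\psi_n \wedge \KWH^x \phi) \cdots)$ that are consistent with the current stage, and for each such demand pick a fresh $y \in \mathbf{X}^+$ and add $\psi_0 \wedge \DK(\psi_1 \wedge \cdots \DK(\psi_n \wedge (\B \phi \wedge \phi)[y/x]) \cdots)$ in $\TBM / \TBMF$, or the analogous formula with $\K \phi[y/x]$ in $\KM / \KMF$. The rules $\mathtt{K_{wh}toTB}$ and $\mathtt{K_{wh}toK}$ are tailored precisely to make this step consistency-preserving: their contrapositive says that if no fresh $y$ worked, the original demand was already inconsistent. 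The canonical worlds $W^c$ are then the resulting saturated maximal consistent sets, $R^c$ is the usual canonical $\K$-relation, and $\rho^c(P, \Gamma) := \{(y_1, \ldots, y_n) : P(y_1, \ldots, y_n) \in \Gamma\}$.

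Reflexivity and transitivity of $R^c$ come from the $\mathbf{S4}$-fragment; for \textbf{strong convergence}, for each $\Gamma \in W^c$ I would show that $\{\phi : \B \phi \in \Gamma\}$ is consistent (using the $\mathbf{KD45}$-behaviour of $\B$ forced by $\mathbf{S4.2}^\K$), saturate it to a single MCS $\Gamma^\sharp$, and verify $\Gamma' R^c \Gamma^\sharp$ for every $\Gamma'$ with $\Gamma R^c \Gamma'$ by unpacking the definition of $\B$ in terms of $\DK \K$. The \textbf{truth lemma} $\M^c, \Gamma, \sigma^c \vDash \phi \iff \phi \in \Gamma$ then follows by induction on $\phi$: the $\KWH^x$ cases split into an existential direction (handled by saturation together with $\mathtt{TBtoK_{wh}} / \mathtt{KtoK_{wh}}$) and an FS direction (handled by $\mathtt{K_{wh}toFS}$ one way, and by the rules $\mathtt{FS\&BtoK_{wh}} / \mathtt{FS\&KtoK_{wh}}$ instantiated at the nested saturated context the other way). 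The extension to $\mathbf{S4.2}_\approx^{\KWH}$ comes by quotienting $\M^c$ under the canonical $\approx$-congruence, with the identity axioms guaranteeing the quotient is a well-defined $S4.2$-constant-domain model. The \textbf{main obstacle} I expect is engineering the saturation step so that it simultaneously (i) provides fresh witnesses for \emph{every} nested existential demand at every stage, (ii) is compatible with the construction of $\Gamma^\sharp$ required for strong convergence, and (iii) never introduces a witness whose FS obligation fails at some downstream $R^c$-successor; the interplay of $\mathtt{BtoBK_{wh}}$ with the saturation rules is precisely what should allow these three requirements to be threaded through a single uniform inductive construction.
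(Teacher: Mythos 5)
Your overall strategy is essentially the paper's: witnesses are added by exploiting the contrapositives of the prefixed rules $\mathtt{K_{wh}toTB}$, $\mathtt{K_{wh}toK}$, $\mathtt{FS\&BtoK_{wh}}$, $\mathtt{FS\&KtoK_{wh}}$ applied to nested $\DK$-demands, and strong convergence is recovered afterwards by adjoining a ``final cluster'' of worlds determined by the beliefs at the root (your $\Gamma^\sharp$ is the paper's $FC = \{\Theta \mid \{\phi \mid \B\phi \in \nu(w_0)\} \subseteq \Theta\}$). So the soundness part and the broad architecture are fine.

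There is, however, one concrete gap in the completeness plan as stated. You saturate only the root MCS and then declare $W^c$ to be ``the resulting saturated maximal consistent sets,'' but the existence lemma for $\DK$ requires that every $\DK\phi$ in a world have an $R^c$-successor that is \emph{itself} saturated (has the $\mathtt{MS}$- and $\mathtt{FS}$-properties); arbitrary MCSs extending $\{\chi \mid \K\chi \in \Gamma\} \cup \{\phi\}$ will not have these properties, and restricting to those that do breaks the existence lemma. The successors must therefore be \emph{constructed} recursively, and here the pull-back of a witness addition at depth $n$ to the root — which is the only place consistency can be tested against the rules — requires that the entire content of each intermediate node be expressible as a single formula $\psi_i$ with $\DK\psi_i$ recorded in its predecessor. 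This is exactly why the paper works with a tree-shaped \emph{network} in which every non-root node is kept \emph{finite} throughout a countable sequence of repair steps (coherence condition (iii): $\vdash \psi \leftrightarrow \bigwedge\nu(v)$ and $\DK\psi \in \nu(w)$), taking unions only at the very end; a Lindenbaum construction that completes each world to an MCS before building its successors cannot summarize that world by one $\psi_i$ and the prefixed rules no longer apply. Relatedly, your $\Gamma^\sharp$ must also be shown to have the $\mathtt{MS}$- and $\mathtt{FS}$-properties for the truth lemma to go through on the final cluster; this does not come from saturating the root directly but from the derived theorems $\DB\KMF^x\phi \to \B\KMF^x\phi$, $\B\KMF^x\phi \to \KMF^x\B\phi$ and the axiom $\mathtt{BtoBK_{wh}}$, which transfer the root's saturation to every member of the cluster. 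Neither point is fatal to your plan, but both must be addressed explicitly; the finiteness device is the key idea your sketch is missing.
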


\begin{proof}
We only sketch the general idea of the proof here.
A detailed proof for the case of $\mathbf{S4.2}^{\KMF}$ can be found in the appendix.

Generally, we use maximal consistent sets (MCS) of formulas which also contain certain witness formulas to construct the canonical model.
The main difficulty is to ensure at the same time that (i) every MCS in the model contains all the witness formulas we need, (ii) every formula of the form $\DK \phi$ in an MCS has some accessible MCS containing $\phi$ as its witness, and
(iii) the canonical model is an $S4.2$-constant-domain model.

In order to construct such a model, we use a step-by-step method.
We start from a consistent set $\Gamma_0$, 
and extend consistent sets to MCS, add new formula sets as witnesses for formulas of the form $\DK \phi$,
and add witness formulas to formula sets during the same process.
The key is to ensure, at each step of the construction, that every formula set except $\Gamma_0$ is \emph{finite},
and all the information contained in a set is recorded in its predecessor with a formula of the form $\DK \phi$.
This ensures that we can always add witness formulas to formula sets using rules like $\mathtt{K_{wh}toTB}$ and $\mathtt{FS\&KtoK_{wh}}$.

Then, after countably many steps, we obtain a model which satisfies both (i) and (ii),
and is also an $S4$-constant-domain model.
Finally, we add another set of MCSs to the model to make it strongly convergent, so that we can obtain an $S4.2$-model.
\end{proof}

\begin{remark}
The above logics also have some interesting technical aspects.

For example, it is shown in \cite{DBLP:conf/fsttcs/PadmanabhaRW18} that the language $\LL(\KM)$ cannot distinguish constant-domain and increasing-domain models in general.
However, when we confine the models to $S4.2$-ones, $\LL(\KM)$ \emph{can} distinguish constant-domain and increasing-domain models,
and consequently, $\mathbf{S4.2}^{\KM}$ is not sound w.r.t.\ the class of $S4.2$-increasing-domain models (e.g.\, $\DK \KM^x \phi \to \KM^x \DK \phi$ is an inner theorem of $\mathbf{S4.2}^{\KM}$, but is not valid on $S4.2$-increasing-domain models).
In fact, for all $\KWH \in \{\TBM, \TBMF, \KM,\ \\ \KMF\}$, $\mathbf{S4.2}^{\KWH}$ is not sound w.r.t.\ $S4.2$-increasing-domain models.

Another interesting fact is that $\mathbf{S4.2}^{\TBMF}$ and $\mathbf{S4.2}^{\KMF}$ are able to distinguish $S4.2$-models (defined in terms of \emph{strong} convergence) and models which are reflexive, transitive but only weakly convergent.
The axiom $\mathtt{BtoBK_{wh}}: \B \phi[y/x] \to \B \KWH^x \phi$ does the trick.
When $\KWH = \TBM$ or $\KM$, on the other hand, we also have $\B \phi[y/x] \to \B \KWH^x \phi$ as an inner theorem of $\mathbf{S4.2}^{\KWH}$,
but in this case, the formula does not have the power to distinguish strong and weak convergence,
and consequently, $\mathbf{S4.2}^{\TBM}$ and $\mathbf{S4.2}^{\KM}$ are also sound w.r.t.\ the class of reflexive, transitive and weakly convergent models.
\end{remark}

\section{Comparisons}
Now, we have the formal ground to compare the different accounts of knowledge-wh.

\subsection{Differences}\label{Differences}
An interesting difference among the different accounts of knowledge-wh concerns the ways these accounts interact with propositional knowledge.

For example, consider \emph{positive introspection}.
Since we take $\mathbf{S4.2}$ to be the underlying logic for propositional knowledge, which is stronger than $\mathbf{S4}$, it is clear that propositional knowledge satisfies positive introspection:
$\K \phi \to \K \K \phi$ is an inner theorem of $\mathbf{S4.2}^{\KWH}$ for any $\KWH \in \{\TBM, \TBMF, \KM, \KMF\}$.
However, does knowledge-wh also have positive introspection?
To put it more formally, is $\KWH^x \phi \to \K \KWH^x \phi$ an inner theorem of $\mathbf{S4.2}^{\KWH}$?
The answer is as follow:

\begin{proposition}\label{PI1}
$\mathbf{S4.2}^{\KM} \vdash \KM^x \phi \to \K \KM^x \phi$,
but $\mathbf{S4.2}^{\KWH} \not \vdash \KWH^x \phi \to \K \KWH^x \phi$ when $\KWH \in \{\TBM, \TBMF, \KMF\}$.
\end{proposition}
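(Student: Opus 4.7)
My plan is to split the proposition in two: a direct syntactic derivation for the positive half, and three small $S4.2$-countermodels (together with soundness from Theorem~\ref{completeness}) for the negative halves.

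For $\mathbf{S4.2}^{\KM} \vdash \KM^x \phi \to \K \KM^x \phi$, I would argue as follows. Instantiate axiom $\mathtt{KtoK_{wh}}$ with $y := x$ (using $\alpha$-renaming of any inner $\KM^x$ inside $\phi$ to secure admissibility) to obtain $\K \phi \to \KM^x \phi$. Necessitation followed by the K-axiom yields $\K \K \phi \to \K \KM^x \phi$, and precomposing with the $4$-axiom $\K \phi \to \K \K \phi$ gives $\K \phi \to \K \KM^x \phi$, whose contrapositive $\neg \K \KM^x \phi \to \neg \K \phi$ I feed into rule $\mathtt{K_{wh}toK}$ with $n = 0$ and $\psi_0 := \neg \K \KM^x \phi$. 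The side condition $x \notin FV(\psi_0)$ holds because $x$ is bound in $\KM^x \phi$, so the rule outputs $\neg \K \KM^x \phi \to \neg \KM^x \phi$, i.e., the desired implication.

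For the three non-derivabilities I use a common base frame $W = \{w_1, w_2, u\}$ with $R$ the reflexive--transitive closure of $w_1 R w_2$, $w_1 R u$, $w_2 R u$; reflexivity, transitivity, and strong convergence (witness $u$) are immediate, and the $\RB$-successor set of every world is $\{u\}$. For $\TBM$ and $\TBMF$, I set $D = \{a\}$, $\rho(P, w_1) = \rho(P, u) = \{a\}$, $\rho(P, w_2) = \emptyset$: then $\B P(a)$ holds everywhere via $u$, $P(a)$ holds at $w_1$ but not at $w_2$, so $\TBM^x P(x)$ and $\TBMF^x P(x)$ hold at $w_1$ (with $a$ as witness, the $\forall$-clause of $\TBMF$ being vacuous on the singleton domain) but fail at $w_2$, whence $\K \TBM^x P(x)$ and $\K \TBMF^x P(x)$ fail at $w_1$. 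The same singleton trick cannot refute the $\KMF$-schema, because $\exists x \K P(x)$ would force $\K P(a)$ and thereby restore $\TBM/\TBMF$-style truth at $w_2$; I therefore enlarge to $D = \{a, b\}$ with $\rho(P, w_1) = \rho(P, u) = \{a, b\}$ and $\rho(P, w_2) = \{a\}$, so $\K P(a)$ still witnesses the existential at $w_2$, but the FS clause fails there for $b$ (since $\B P(b)$ holds via $u$ while $P(b)$ fails at $w_2$), giving $\KMF^x P(x)$ at $w_1$ and its failure at $w_2$.

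The only mildly subtle spot is the admissibility issue in the very first step of the positive derivation, handled by $\alpha$-renaming; the remaining work -- verifying the $S4.2$-frame conditions, computing $\RB$, and evaluating the bundled operators pointwise -- is entirely routine, and I do not anticipate a major obstacle.
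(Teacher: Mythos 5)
Your proof is correct. The paper does not actually include a proof of this proposition (the appendix only proves Theorem~\ref{completeness}, Proposition~\ref{PI2} and Proposition~\ref{MONO2}), but your two-part strategy is exactly the natural one: the positive half via $\mathtt{KtoK_{wh}}$, the $4$-axiom, and the $n=0$ instance of $\mathtt{K_{wh}toK}$ checks out (the side condition $x \notin FV(\neg\K\KM^x\phi)$ holds since $x$ is bound, and the admissibility issue for $y:=x$ is correctly discharged by renaming), and all three countermodels are genuine $S4.2$-constant-domain models with $\RB$ computed correctly, so soundness yields the non-derivability claims.
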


The underlying reason for the failure of positive introspection in $\mathbf{S4.2}^{\TBM}$, $\mathbf{S4.2}^{\TBMF}$ and $\mathbf{S4.2}^{\KMF}$ is similar.
Essentially, this is because these accounts may involve true beliefs (the MS-true belief in $\TBM^x \phi$ or $\TBMF^x \phi$, or a true belief required by the FS condition in $\TBMF^x \phi$ or $\KMF^x \phi$), but positive introspection requires \emph{knowledge} rather than mere true belief, while the latter in general does not imply the former in an $\mathbf{S4.2}$ system.

The following proposition helps us make this point clear on the formal level.
Note that in the formulation of (a part of) the following proposition, we will also need the identity relation $\approx$ and the logic $\mathbf{S4.2}_\approx^{\KWH}$ which involves the axioms for $\approx$.

\begin{proposition}\label{PI2}
We have the following identities between logics:
\begin{center}
\begin{tabular}{lcl}
$\mathbf{S4.2}^{\TBM} \oplus \TBM^x \phi \to \K \TBM^x \phi$ & $=$ & $\mathbf{S4.2}^{\TBM} \oplus \B \phi \wedge \phi \to \K \phi$\\
$\mathbf{S4.2}^{\TBMF} \oplus \TBMF^x \phi \to \K \TBMF^x \phi$ & $=$ & $\mathbf{S4.2}^{\TBMF} \oplus \B \phi \wedge \phi \to \K \phi$\\
$\mathbf{S4.2}_\approx^{\KMF} \oplus \KMF^x \phi \to \K \KMF^x \phi$ & $=$ & $\mathbf{S4.2}_\approx^{\KMF} \oplus x \not \approx y \to (\B \phi \wedge \phi \to \K \phi)$
\end{tabular}
\end{center}
\end{proposition}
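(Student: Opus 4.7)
My plan is to prove each of the three identities by deriving each positive-introspection axiom from the corresponding reflection principle and vice versa. Cases (1) and (2) follow a common template; case (3) requires an additional identity-based trick.

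For cases (1) and (2) in the direction ``$\supseteq$'' (deriving $\B\phi\wedge\phi\to\K\phi$ from $\KWH^x\phi\to\K\KWH^x\phi$), pick a variable $y$ not free in $\phi$ and derive $\KWH^y\phi\leftrightarrow\B\phi\wedge\phi$ using $\mathtt{K_{wh}toTB}$ (with $n=0$, $\psi_0 := \neg(\B\phi\wedge\phi)$; the side condition $y\notin FV(\psi_0)$ holds) together with $\mathtt{TBtoK_{wh}}$ (for $\TBM$) or $\mathtt{FS\&BtoK_{wh}}$ with premise $\vdash\phi\to(\B\phi\to\phi)$ (for $\TBMF$). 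The instance $\KWH^y\phi\to\K\KWH^y\phi$ of positive introspection then collapses to $\B\phi\wedge\phi\to\K(\B\phi\wedge\phi)\to\K\phi$. For the direction ``$\subseteq$'', combine $\B\phi\to\K\B\phi$ (an $\mathbf{S4.2}^\K$-theorem) with the reflection axiom to get $\B\phi\wedge\phi\to\K(\B\phi\wedge\phi)$; for $\TBM$, $\mathtt{TBtoK_{wh}}$ (with $y := x$), necessitation, and $\mathtt{K_{wh}toTB}$ (with $\psi_0 := \neg\K\TBM^x\phi$) conclude directly. For $\TBMF$, first derive the auxiliary lemma $\TBMF^x\phi\to\K(\B\phi\to\phi)$ by combining $\mathtt{K_{wh}toFS}$, $\DK\B\phi\to\B\phi$, and the reflection axiom (which together rule out any accessible world witnessing $\B\phi\wedge\neg\phi$); then $\mathtt{FS\&BtoK_{wh}}$ yields $\TBMF^x\phi\to\K(\B\phi[y/x]\to\TBMF^x\phi)$, combining with $\B\phi\to\K\B\phi$ (taking $y := x$) gives $\TBMF^x\phi\wedge\B\phi\to\K\TBMF^x\phi$, and $\mathtt{K_{wh}toTB}$ with $\psi_0 := \TBMF^x\phi\wedge\neg\K\TBMF^x\phi$ concludes.

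Case (3), $\KMF$, is more delicate: $\KMF^y\phi\leftrightarrow\K\phi$ when $y\notin FV(\phi)$ trivialises the fresh-variable trick, and the reflection principle itself is restricted by $x\not\approx y$. For ``$\supseteq$'', use the auxiliary $\phi^* := (z\not\approx x)\vee\phi$ with $z$ fresh: $\mathtt{K_{wh}toFS}$ with substitution $[x/z]$ (and the identity axiom, so that $x\not\approx x$ is inconsistent) gives $\KMF^z\phi^*\to(\B\phi\to\phi)$; $\mathtt{FS\&KtoK_{wh}}$ with substitution $[y/z]$, under $x\not\approx y$ (using $x\not\approx y\to\K(x\not\approx y)$ so that $\K\phi^*[y/z]$ becomes derivable), gives $(\B\phi\to\phi)\to\KMF^z\phi^*$; instantiating positive introspection at $\KMF^z\phi^*$ then yields, under $x\not\approx y$, $(\B\phi\to\phi)\to\K(\B\phi\to\phi)$, whence $x\not\approx y\to(\B\phi\wedge\phi\to\K\phi)$ follows by $\B\phi\to\K\B\phi$ and $\K$-distribution. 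For ``$\subseteq$'', the strategy mirrors the $\TBMF$ subcase but uses $\mathtt{FS\&KtoK_{wh}}$ and $\mathtt{K_{wh}toK}$ in place of $\mathtt{FS\&BtoK_{wh}}$ and $\mathtt{K_{wh}toTB}$; the analogue of the auxiliary lemma $\KMF^x\phi\to\K(\B\phi\to\phi)$ must additionally handle a case-split ``current value of $x$ equals the MS-K witness'' via the identity axioms, since the restricted reflection principle only applies when this equality fails.

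The main obstacle is precisely this last case-split: formalising the witness-versus-non-witness argument inside the restricted bundled-operator syntax, bridging the gap between the full reflection principle used in the $\TBMF$ proof and the restricted one available for $\KMF$. Cases (1) and (2) follow the template uniformly once the key reductions are verified.
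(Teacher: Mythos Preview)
Your proposal is correct and follows essentially the same route as the paper: cases (1) and (2) rest on the equivalence $\KWH^y\phi \leftrightarrow \B\phi\wedge\phi$ for $y\notin FV(\phi)$ (which the paper merely states while you spell out the syntactic details), and case (3) uses an auxiliary of the form $\KMF^z((z\not\approx x)\vee\phi)$ together with $\mathtt{K_{wh}toFS}$ and $\mathtt{FS\&KtoK_{wh}}$ for the $\supseteq$ direction, and an identity-based case split for $\subseteq$. The only notable variation is that for the $\subseteq$ direction of (3) your case split is on ``the free $x$ equals the MS-K witness'' whereas the paper splits on $\KMF^y(x\not\approx y)$ (i.e.\ whether the domain has at least two elements), and you target the auxiliary $\KMF^x\phi\to\K(\B\phi\to\phi)$ while the paper proves the $S4.2$-equivalent $\KMF^x\phi\to(\B\phi\to\K\phi)$ en route to $\KMF^x\phi\to\KMF^x\K\phi$; both decompositions work and the remaining steps coincide.
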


In other words, under our $\mathbf{S4.2}$ setting for propositional knowledge, requiring $\TBM^x \phi$ and $\TBMF^x \phi$ to satisfy positive introspection is in effect the same as requiring true belief to imply knowledge.
The case for $\KMF^x \phi$, on the other hand, is a bit more complex:
when $\KMF^x \phi$ satisfies positive introspection, either true belief implies knowledge, or there is at most one element in the domain (in which case the notion of FS is clearly trivialized).

A similar phenomenon also appears in the case of the formula $\KWH^x \phi \to \KWH^x \K \phi$.
Intuitively, the formula says that knowledge-wh offers the agent a way to obtain propositional knowledge:
for example, if we interpret $\KWH$ in terms of knowledge-how, then the formula says that if an agent knows how to achieve $\phi$, then she also knows how to make herself know that $\phi$.
In fact, Proposition \ref{PI1} and \ref{PI2} still hold after we substitute every occurrences of $\K \KWH^x \phi$ in these propositions with $\KWH^x \K \phi$, 
since
 $\KWH^x \K \phi \leftrightarrow \K \KWH^x \phi$ is in fact an inner theorem in $\mathbf{S4.2}^{\KWH}$ for all $\KWH \in \{\TBM, \TBMF, \KM, \KMF\}$. 

A more interesting difference among the different accounts of knowledge-wh concerns the \emph{monotonicity} of knowledge-wh.
We say our notion of knowledge-wh is \emph{monotone} if the following rule is admissible in the corresponding logic:

$$\mathtt{MONO} \quad \displaystyle{\frac{\vdash \phi \to \psi}{\vdash \KWH^x \phi \to \KWH^x \psi}}$$

The rule says that if $\psi$ follows logically from $\phi$,
then if an agent has knowledge-wh of $\phi$, then she automatically also has knowledge-wh of $\psi$.

Note that in order for this to hold, we need to assume that the agent we consider is logically omniscient;
and we have indeed assumed so in our underlying logic for propositional logic, $\mathbf{S4.2}^\K$.
However, even such a logically omniscient agent still may \emph{not} have a monotone notion of knowledge-wh, when FS is involved in our account of knowledge-wh.

The propositions below show how FS influences the monotonicity of knowledge-wh.
(Note that we need the identity relation $\approx$ to formulate Proposition \ref{MONO2}.)

\begin{proposition}\label{MONO1}
$\mathtt{MONO}$ is admissible in $\mathbf{S4.2}^{\TBM}$ and $\mathbf{S4.2}^{\KM}$,
but inadmissible in $\mathbf{S4.2}^{\TBMF}$ and $\mathbf{S4.2}^{\KMF}$.
\end{proposition}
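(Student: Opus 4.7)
The plan is to split the two claims and lean on Theorem \ref{completeness} in both directions: soundness-then-completeness for the admissibility halves, and an explicit counter-model plus soundness for the inadmissibility halves.

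For admissibility in $\mathbf{S4.2}^{\TBM}$ and $\mathbf{S4.2}^{\KM}$, I would argue semantically. Suppose $\vdash \phi \to \psi$. By soundness, $\phi \to \psi$ is valid on every $S4.2$-constant-domain model, and since $\B$ (resp.\ $\K$) is a normal modality interpreted via $\RB$ (resp.\ $R$), $\B \phi \to \B \psi$ (resp.\ $\K \phi \to \K \psi$) is valid as well. Hence any $a \in D$ witnessing $\TBM^x \phi$ at $(\M, w, \sigma)$, i.e.\ with $\M, w, \sigma[x \mapsto a] \vDash \B \phi \wedge \phi$, also satisfies $\B \psi \wedge \psi$ there and so witnesses $\TBM^x \psi$; the $\KM^x$ case is identical with $\K \phi$ in place of $\B \phi \wedge \phi$. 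By completeness, $\vdash \TBM^x \phi \to \TBM^x \psi$ and $\vdash \KM^x \phi \to \KM^x \psi$.

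For inadmissibility in $\mathbf{S4.2}^{\TBMF}$ and $\mathbf{S4.2}^{\KMF}$, I would take the propositional tautology $P(x) \wedge Q(x) \to P(x)$, certainly derivable in both logics, and exhibit a single model that falsifies both $\TBMF^x(P \wedge Q) \to \TBMF^x P$ and $\KMF^x(P \wedge Q) \to \KMF^x P$. Let $W = \{w, v\}$, $R = \{(w,w), (w,v), (v,v)\}$, $D = \{a, b\}$, and set $\rho(P, w) = \rho(Q, w) = \{a\}$, $\rho(P, v) = \{a, b\}$, $\rho(Q, v) = \{a\}$. Then $R$ is reflexive, transitive, and strongly convergent (with $v$ as the common successor from every world), hence an $S4.2$-constant-domain model; a direct computation yields $\RB = \{(w, v), (v, v)\}$. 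The design deliberately plants a \emph{false belief about $P$} at $b$ from $w$---since $P(b)$ holds at the unique $\RB$-successor $v$ but fails at $w$---while killing any analogous belief about $P \wedge Q$, because $Q(b)$ fails at $v$. Consequently the FS conjunct of both $\TBMF^x P$ and $\KMF^x P$ is violated at $b$, whereas $a$ witnesses the existential conjunct of $\TBMF^x(P \wedge Q)$ and $\KMF^x(P \wedge Q)$ and their FS conjunct holds (vacuously at $b$, trivially at $a$). Soundness then rules out derivability of either monotonicity instance.

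The main obstacle is engineering the model so that the belief clause for the \emph{stronger} formula $P \wedge Q$ is strictly easier to satisfy than the belief clause for the \emph{weaker} formula $P$---the very reversal monotonicity requires in order to fail. The trick is to funnel all beliefs through a single witness world $v$ at which $P$ is deliberately over-extended but $Q$ is not; once the shape of $\RB$ is pinned down, the remaining checks on the two-element domain are routine.
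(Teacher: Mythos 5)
Your proof is correct: the semantic argument for admissibility in $\mathbf{S4.2}^{\TBM}$ and $\mathbf{S4.2}^{\KM}$ via soundness and completeness is sound, and your two-world, two-element countermodel does exactly what is needed (the computations of $\RB$ and of the FS clause at $b$ check out, and $\TBMF^x(Px\wedge Qx)$ and $\KMF^x(Px\wedge Qx)$ hold at $w$ while $\TBMF^x Px$ and $\KMF^x Px$ fail there). The paper does not spell out a proof of this proposition, but your route---semantic reasoning for the positive half, an explicit $S4.2$-constant-domain countermodel plus soundness for the negative half---is the natural one and matches the paper's general strategy of arguing about these operators model-theoretically.
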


\begin{proposition}\label{MONO2}
The following equivalences holds:
\begin{center}
\begin{tabular}{lcl}
$\mathbf{S4.2}_\approx^{\TBMF} \oplus \mathtt{MONO}$ & $=$ & $\mathbf{S4.2}_\approx^{\TBMF} \oplus x \not \approx y \to (\B \phi \to \phi)$\\
$\mathbf{S4.2}_\approx^{\KMF} \oplus \mathtt{MONO}$ & $=$ & $\mathbf{S4.2}_\approx^{\KMF} \oplus x \not \approx y \to (\B \phi \to \phi)$
\end{tabular}
\end{center}
\end{proposition}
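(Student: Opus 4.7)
The plan is to establish each of the two claimed equivalences by proving the two inclusions separately; we treat the $\TBMF$ case explicitly, since the $\KMF$ case is entirely parallel, with $\mathtt{FS\&KtoK_{wh}}$ and $\K$-style premises replacing $\mathtt{FS\&BtoK_{wh}}$ and $\B \wedge (\cdot)$-style premises throughout.

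For the inclusion $(\supseteq)$, the plan is to derive the schema $x \not\approx y \to (\B \phi \to \phi)$ inside $\mathbf{S4.2}_\approx^{\TBMF} \oplus \mathtt{MONO}$ by constructing, for a fresh variable $z$, a $\TBMF^z$-formula whose FS-consequence instantiates to the schema. First we derive $\vdash \B(z \approx y) \to z \approx y$ from the rigidity axiom $z \not\approx y \to \K(z \not\approx y)$ together with $\B$-seriality; then an application of $\mathtt{FS\&BtoK_{wh}}$ with $\psi_0 = \top$, $n=0$ and substitution sending $z$ to $y$, combined with the theorem $\vdash \B(y \approx y)$, yields $\vdash \TBMF^z(z \approx y)$. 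Next, $\mathtt{MONO}$ applied to the tautology $z \approx y \to (z \approx y \vee \phi)$ (with $z$ chosen fresh for $\phi$) gives $\vdash \TBMF^z(z \approx y \vee \phi)$. Finally, $\mathtt{K_{wh}toFS}$ with substitution $[x/z]$ produces $\vdash \B(x \approx y \vee \phi) \to (x \approx y \vee \phi)$, which together with $\vdash \B \phi \to \B(x \approx y \vee \phi)$ (by $\B$-normality on $\phi \to (x \approx y \vee \phi)$) delivers the schema by propositional reasoning under the hypothesis $x \not\approx y$.

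For the inclusion $(\subseteq)$, the plan is to show $\mathtt{MONO}$ is admissible in $\mathcal{L}_A := \mathbf{S4.2}_\approx^{\TBMF} \oplus (x \not\approx y \to (\B \phi \to \phi))$ through a semantic route. Every $\mathcal{L}_A$-model is either singleton-domain, in which case $\TBMF^x \chi$ collapses to $\B \chi \wedge \chi$ and $\mathtt{MONO}$ is immediate from $\B$-normality, or of size at least two, in which case the extra axiom forces $\B \chi \to \chi$ to be valid for every $\chi$, so that the FS-clause of $\TBMF^x \psi$ holds automatically while the existential clause transfers via $\vdash \B \phi \wedge \phi \to \B \psi \wedge \psi$. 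Hence $\mathtt{MONO}$ preserves validity on the $\mathcal{L}_A$-model class, and its syntactic admissibility will follow once we extend Theorem \ref{completeness} to $\mathcal{L}_A$, which we plan to do by adding to the canonical-model construction a final step that collapses $\RB$ to a reflexive relation whenever the constructed domain contains at least two distinct elements.

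The main obstacle will be this completeness extension: the new axiom imposes a global, domain-size-dependent constraint on $\RB$ that does not slot neatly into the step-by-step witness-insertion strategy of Theorem \ref{completeness}, and one must verify that the final collapse preserves both the $S4.2$-structure and the witness properties. A purely syntactic alternative handles the $|D| \geq 2$ case cleanly --- reducing the target via $\mathtt{K_{wh}toTB}$ (contrapositively, with $\psi_0 = \neg \TBMF^x \psi$) to $\vdash \B \phi \wedge \phi \to \TBMF^x \psi$, and then applying $\mathtt{FS\&BtoK_{wh}}$ to the fresh instance $v \not\approx w \to (\B \psi \to \psi)$ of the axiom --- but the singleton-domain case lacks a natural quantifier-free witness in our language, which is why the semantic detour through completeness is preferable.
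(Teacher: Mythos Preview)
Your $(\supseteq)$ direction is correct and is essentially the paper's own argument (the paper writes out the $\KMF$ case, using $\KMF^y(x \approx y)$ in place of your $\TBMF^z(z \approx y)$, but the structure is identical).

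For $(\subseteq)$ there is a genuine gap: the argument rests on a completeness theorem for $\mathcal{L}_A$ that you neither prove nor sketch convincingly. The proposed ``collapse of $\RB$'' is unnecessary --- if completeness for $\mathcal{L}_A$ holds at all, it is because every $\mathcal{L}_A$-MCS already contains every instance of the added axiom, so the unmodified canonical model already validates it --- and you do not verify that any such modification preserves the truth lemma. More to the point, your claim that a purely syntactic proof cannot handle the singleton-domain branch is wrong, and the paper shows exactly how to do it. One case-splits not on a meta-level domain-size hypothesis but on the \emph{formula} $\KWH^y(x \not\approx y)$ with $y$ fresh. From the added axiom and $\mathtt{K_{wh}toTB}$ (resp.\ $\mathtt{K_{wh}toK}$) one obtains $\vdash \KWH^y(x \not\approx y) \to (\B\psi \to \psi)$. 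For the negative branch, the base logic $\mathbf{S4.2}^{\KWH}_\approx$ already proves $\vdash \neg\,\KWH^y(x \not\approx y) \to z \approx x$ for every fresh $z$: rigidity gives $z \not\approx x \to \B(x \not\approx z)$, and since $\vdash \B(x \not\approx y) \to x \not\approx y$ (by rigidity of $\approx$ and seriality of $\B$), one application of $\mathtt{FS\&BtoK_{wh}}$ (resp.\ $\mathtt{FS\&KtoK_{wh}}$) yields the contrapositive $z \not\approx x \to \KWH^y(x \not\approx y)$. This is precisely the case-splitter you said was unavailable: under $\neg\,\KWH^y(x \not\approx y)$ any fresh witness $z$ collapses to $x$, so the MS-part of $\KWH^x\phi$ already delivers $\psi$, whence $(\B\psi \to \psi)$. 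Combining the two branches gives $\vdash \KWH^x\phi \to (\B\psi \to \psi)$, after which $\mathtt{FS\&BtoK_{wh}}$ (resp.\ $\mathtt{FS\&KtoK_{wh}}$) followed by $\mathtt{K_{wh}toTB}$ (resp.\ $\mathtt{K_{wh}toK}$) yield $\vdash \KWH^x\phi \to \KWH^x\psi$ with no semantic detour.
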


As we can see, FS corrupts the monotonicity of knowledge-wh.
In fact, as it is shown in Proposition \ref{MONO2}, if we force $\TBMF^x \phi$ and $\KMF^x \phi$ to be monotone, then either the agent can have no false belief at all, or there is only one element in the domain of the model which characterizes her knowledge and belief - 
in both cases, the notion of FS is completely trivialized.
In this sense, we may say that FS is incompatible with the monotonicity of knowledge-wh in quite an essential way:
in order to retain the monotonicity of knowledge-wh, we have to give up FS completely.

\subsection{Commonalities}
As we have seen, different accounts of knowledge-wh behave rather differently when interacting with propositional knowledge.
However, when interacting with \emph{belief},
their behaviors are much more similar.

For example, the following proposition shows some inner theorems shared by all the logics presented above:\footnote{
Note that (iii) in the proposition below is in fact an \emph{axiom} in $\mathbf{S4.2}^{\TBMF}$ and $\mathbf{S4.2}^{\KMF}$.
}

\begin{proposition}
For all $\KWH \in \{\TBM, \TBMF, \KM, \KMF\}$,
the following are $\mathbf{S4.2}^{\KWH}$-theorems:
\begin{center}
\begin{tabular}{llll}
(i) & $\B \phi[y/x] \to \KWH^x \B \phi$ & (ii) & $\neg \B \phi[y/x] \to \KWH^x \neg \B \phi$\\
(iii) & $\B \phi[y/x] \to \B \KWH^x \phi$ & 
(iv) & $\B \KWH^x \phi \vee \B \neg \KWH^x \phi$
\end{tabular}
\end{center}
\end{proposition}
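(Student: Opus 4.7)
The plan is to prove the four items in order: (iii) first, then (i) and (ii) as parallel applications of the same pattern with $\B\phi$ and $\neg\B\phi$ substituted for $\phi$, and finally (iv) via a positive-introspection-with-belief lemma $\KWH^x\phi \to \B\KWH^x\phi$ combined with the 5-axiom for $\B$. For (iii), when $\KWH \in \{\TBMF, \KMF\}$ the formula is literally the axiom $\mathtt{BtoBK_{wh}}$. For $\TBM$ I would $\B$-necessitate $\mathtt{TBtoK_{wh}}$ to obtain $(\B\B\phi \wedge \B\phi)[y/x] \to \B\TBM^x\phi$ and close the gap with the $KD45$-principle $\B\alpha \to \B\B\alpha$; for $\KM$, $\B$-necessitate $\mathtt{KtoK_{wh}}$ and use the $S4.2$-interaction $\B\alpha \to \B\K\alpha$.

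The derivations of (i) and (ii) run by exactly the same template with $\B\phi$ (resp.\ $\neg\B\phi$) in place of $\phi$. For $\TBM$ and $\KM$ the relevant ``witness'' axiom yields the required implication after one further application of an $S4.2/KD45$-interaction principle: $\B\alpha \to \K\B\alpha$ and $\B\alpha \to \B\B\alpha$ for (i), $\neg\B\alpha \to \K\neg\B\alpha$ and $\neg\B\alpha \to \B\neg\B\alpha$ for (ii). For $\TBMF$ and $\KMF$ I would instead invoke the rules $\mathtt{FS\&BtoK_{wh}}$ and $\mathtt{FS\&KtoK_{wh}}$ (with $n = 0$ and $\psi_0 = \top$), feeding in the $KD45$-theorem $\B\B\phi \to \B\phi$ as premise for (i) and $\B\neg\B\phi \to \neg\B\phi$ for (ii); both premises are easy consequences of D together with the 4 and 5 axioms for $\B$.

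For (iv), the key lemma is $\KWH^x\phi \to \B\KWH^x\phi$. Given it, the disjunction follows at once: assume $\neg\B\KWH^x\phi$; by the 5-axiom of $\B$, $\B\neg\B\KWH^x\phi$; $\B$-necessitating the contrapositive of the lemma and distributing gives $\B\neg\B\KWH^x\phi \to \B\neg\KWH^x\phi$, hence $\B\neg\KWH^x\phi$. For $\KWH = \KM$ the lemma is immediate from Proposition \ref{PI1} together with $\K\alpha \to \B\alpha$. For the remaining three cases I plan to apply the corresponding ``$\KWH$-elimination'' rule ($\mathtt{K_{wh}toTB}$ for $\TBM$ and $\TBMF$, $\mathtt{K_{wh}toK}$ for $\KMF$) with $n = 0$ and $\psi_0 := \neg\B\KWH^x\phi$ (admissible since $x$ is bound in $\KWH^x\phi$, so $x \notin FV(\psi_0)$); the required premise $\neg\B\KWH^x\phi \to \neg(\B\phi \wedge \phi)$ (resp.\ $\neg\B\KWH^x\phi \to \neg\K\phi$) falls out of the contrapositive of (iii) with $y := x$, composed with $\K\alpha \to \B\alpha$ where needed. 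The main obstacle is precisely this last maneuver: because Proposition \ref{PI1} denies $\KWH^x\phi \to \K\KWH^x\phi$ for $\KWH \ne \KM$, the natural knowledge-based route to (iv) is blocked, and the cleverness lies in choosing $\psi_0 := \neg\B\KWH^x\phi$ so that the elimination rule internalizes (iii) as exactly the $\B$-positive-introspection principle for $\KWH^x\phi$ that we need.
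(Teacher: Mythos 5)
Your proposal is correct, and the paper in fact gives no proof of this proposition at all (it only remarks that (iii) is an axiom of $\mathbf{S4.2}^{\TBMF}$ and $\mathbf{S4.2}^{\KMF}$ and that (v)--(vi) follow from (i) and (iii)), so there is nothing to diverge from; your argument supplies exactly the kind of derivation the paper's remarks presuppose, and your item (iv) is essentially the inner theorem $\mathtt{NBK_{wh}toBNK_{wh}}$ that the appendix uses without proof. All the ingredients you invoke are legitimately available: normality of the defined $\B = \DK\K$ together with $\mathtt{D}$, $\mathtt{4}$, $\mathtt{5}$ for $\B$ and the interactions $\K\alpha\to\B\alpha$, $\B\alpha\to\K\B\alpha$, $\neg\B\alpha\to\K\neg\B\alpha$, $\B\alpha\to\B\K\alpha$ are exactly the $\mathbf{S4.2}$/KD45 facts the paper asserts; the $n=0$ instances of the bundled rules are well-formed; and the choice $\psi_0 := \neg\B\KWH^x\phi$ satisfies the side condition $x\notin FV(\psi_0)$ since $x$ is bound in $\KWH^x\phi$. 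Two points worth making explicit when you write it up: (a) in the $\TBMF$/$\KMF$ cases of (i) and (ii) the rule conclusions are $\B(\B\phi)[y/x]\to\KWH^x\B\phi$ and $\K(\B\phi)[y/x]\to\KWH^x\B\phi$ (resp.\ with $\neg\B\phi$), so you still owe the same final composition with $\B\alpha\to\B\B\alpha$ / $\B\alpha\to\K\B\alpha$ (resp.\ $\mathtt{5}$ for $\B$ / $\neg\B\alpha\to\K\neg\B\alpha$) that you state only for $\TBM$ and $\KM$; (b) the instance $y:=x$ of (iii) is the identity substitution and hence trivially admissible, which is what licenses the premise $\neg\B\KWH^x\phi\to\neg\B\phi$ feeding the elimination rule in (iv).
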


If we interpret $\KWH^x \phi$ in terms of knowledge-how, then (i) and (ii) say that if an agent believes / does not believe that some certain $y$ is a way to achieve $\phi$, then she knows how to make herself believe / not believe that $\phi$;
(iii) says that if the agent believes that some $y$ is a way to achieve $\phi$, then she also believes that she knows how to achieve $\phi$;
and (iv) says that an agent is ``confident'' concerning her own epistemic state:
for any $\phi$, she either believes that she knows how to $\phi$, or believes that she does not knows how to $\phi$. Note that in $\mathbf{S4.2}^{\K}$, we also have the interaction principles $\B \phi \to \K \B \phi$, $\neg \B \phi \to \K \neg \B \phi$ and $\B \phi \to \B \K \phi$ and $\B \K \phi \vee \B \neg \K \phi$ between propositional knowledge and belief;
hence, we may say that when interacting with propositional belief (rather than knowledge), 
our accounts of knowledge-wh show more aspects that resemble propositional knowledge.

Also note that from (i) and (iii),
we can deduce the following two formulas, respectively: 
\begin{center}
\begin{tabular}{llll}
(v) & $\KWH^x \phi \to \KWH^x \B \phi$ & (vi) &  $\KWH^x \phi \to \B \KWH^x \phi$
\end{tabular}
\end{center}
As we can see, though $\KWH^x \phi \to \KWH^x \K \phi$ and  $\KWH^x \phi \to \K \KWH^x \phi$ cannot be deduced in $\mathbf{S4.2}^{\KWH}$ when $\KWH \in \{\TBM, \TBMF, \KMF\}$, when the operator $\K$ is relaxed to $\B$, we obtain (v) and (vi), which are inner theorems of 
$\mathbf{S4.2}^{\KWH}$ for all $\KWH \in \{\TBM, \TBMF, \KM, \KMF\}$.

Another interesting commonality shared by all our logics (which also has to do with the interaction between knowledge-wh and belief) concerns what logic of knowledge-wh our agent believes.

In section \ref{Differences},
we have already shown some complexities in the reasoning about knowledge-wh:
for example, concerning positive introspection and monotonicity, different accounts yield different behaviors of knowledge-wh.
These complexities, however, only appear when \emph{we} reason about the knowledge-wh of an agent from an external perspective;
when \emph{the agent herself} reasons about her own knowledge-wh from within, all such complexities evaporate.

To put this point more rigidly, we introduce the following notion:

\begin{definition}
For any logic $\mathbf{L}$, let $\mathbf{L}_{\mathsf{B}} = \{\phi \mid \B \phi \in \mathbf{L}\}$.
\end{definition}

Intuitively, for a logic $\mathbf{L}$, $\mathbf{L}_\mathsf{B}$ collects all the formulas which $\mathbf{L}$ says that an agent believes.
In this sense, if $\mathbf{L}$ characterizes the epistemic states of an agent,
then $\mathbf{L}_\mathsf{B}$ characterizes the epistemic logic believed by this agent.

Then, with the help of this new notation, 
we can formulate the following theorem:

\begin{proposition}\label{S5inbelief}
For all $\KWH \in \{\TBM, \TBMF, \KM, \KMF\}$,
$\mathbf{S4.2}^{\KWH}_\mathsf{B}$ can be axiomatized by the following system:

\begin{center}
\begin{tabular}{l|l}
$\mathbf{S5}^\K$ & All axioms and rules of an $\mathbf{S5}$ system for $\K$\\
$\mathtt{KtoK_{wh}}$ & $\K \phi[y/x] \to \KWH^x \phi$\\
\multirow{2}{*}{$\mathtt{K_{wh}toK}^0$} & \multirow{2}{*}{$\displaystyle{\frac{\vdash \K \phi \to \psi}{\vdash \K^x \phi \to \psi}}$ (where $x \notin FV(\psi)$)}\\\\
\end{tabular}
\end{center}
\end{proposition}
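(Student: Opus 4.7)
The plan is to pin down $\mathbf{S4.2}^{\KWH}_\mathsf{B}$ semantically via Theorem~\ref{completeness}: every axiom and rule of the proposed system is sound when prefixed by $\B$ on $\mathbf{S4.2}$-models, and every formula not derivable in the system has an $\mathbf{S5}$-countermodel whose $\B$-prefix refutes its membership in $\mathbf{S4.2}^{\KWH}$.

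The key structural observation is that in any $\mathbf{S4.2}$-model, for every $w$ the belief cluster $R_\B(w)$ is non-empty, and for every $v\in R_\B(w)$ we have $R(v)=R_\B(v)=R_\B(w)$, so $\K$ and $\B$ coincide at $v$. Consequently, inside a belief cluster, the true-belief clauses of $\TBM^x$ and $\TBMF^x$ and the false-belief-sensitivity clauses of $\TBMF^x$ and $\KMF^x$ all trivialize, and all four bundled operators collapse to $\KM^x\phi$, equivalently ``there exists $a\in D$ such that $\phi[a/x]$ holds throughout $R_\B(w)$''. In particular, $\B\chi$ is valid on $\mathbf{S4.2}$-constant-domain models iff $\chi$ is valid on $\mathbf{S5}$-constant-domain models with each $\KWH^x$ read as $\exists x\K$.

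For soundness I verify each ingredient. The $\mathbf{S5}^\K$ axioms about $\K$ yield believed theorems by Stalnaker's classical argument that, inside a belief cluster, the $.2$ axiom boosts to the $5$ axiom. $\mathtt{KtoK_{wh}}$ is either an axiom of $\mathbf{S4.2}^{\KWH}$ (for $\KWH\in\{\KM,\KMF\}$) or derivable from $\mathtt{TBtoK_{wh}}$ via $\K\phi\to\B\phi\wedge\phi$ (for $\KWH\in\{\TBM,\TBMF\}$), so its $\B$-prefix is a theorem by necessitation together with $\K\phi\to\B\phi$. For the rule $\mathtt{K_{wh}toK}^0$, the cleanest path is semantic: assuming $\B(\K\phi\to\psi)$ is valid and $x\notin FV(\psi)$, fix a pointed $\mathbf{S4.2}$-model $(\M,w,\sigma)$, a $v\in R_\B(w)$, and an $a\in D$ witnessing $\KWH^x\phi$ at $(v,\sigma)$; the collapse yields $\K\phi$ at $(v,\sigma[x\mapsto a])$, so $\psi$ holds there, and since $x\notin FV(\psi)$ also at $(v,\sigma)$; Theorem~\ref{completeness} then converts validity back into derivability of $\B(\KWH^x\phi\to\psi)$.

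For completeness I first establish by a Henkin-style canonical-model construction that the proposed system is complete with respect to $\mathbf{S5}$-constant-domain models (with $\KWH^x$ read as $\exists x\K$); the rule $\mathtt{K_{wh}toK}^0$ acts as $\exists$-elimination and supplies the witnesses for $\KWH$-formulas inside maximal consistent sets. Given this, if $\chi$ is not derivable in the system I obtain a pointed $\mathbf{S5}$-model $(\M,w,\sigma)\not\vDash\chi$; since $R_\B=R$ on any $\mathbf{S5}$-model, the key observation yields that $\chi$'s $\mathbf{S4.2}^{\KWH}$-truth value at $(\M,w,\sigma)$ agrees with its $\exists x\K$-truth value, so reflexivity forces $\B\chi$ to fail at $(\M,w,\sigma)$, and soundness from Theorem~\ref{completeness} concludes $\B\chi\notin\mathbf{S4.2}^{\KWH}$. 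The main obstacle is precisely the $\mathbf{S5}$-completeness of the new system: although the canonical construction is lighter than the one behind Theorem~\ref{completeness} (no convergence step is needed), care must still be taken to use $\mathtt{K_{wh}toK}^0$ to supply a witness for every $\KWH$-formula in every maximal consistent set, woven together with the standard Lindenbaum extension.
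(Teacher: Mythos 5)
Your proposal is correct, and it follows what is evidently the intended argument: the paper gives no explicit proof of Proposition \ref{S5inbelief}, but its surrounding remarks (the equivalence with $\mathsf{SMLMSK}$ from Wang 2017) point to exactly your reduction. Your key structural lemma is right and is the heart of the matter: for $v \in \RB(w)$ in an $S4.2$-model one has $R(v) = \RB(v) = \RB(w)$, so the belief cluster is a generated $S5$-submodel on which all four bundled operators collapse to $\exists x \K$, whence $\B\chi \in \mathbf{S4.2}^{\KWH}$ iff (by Theorem \ref{completeness}) $\B\chi$ is $S4.2$-valid iff $\chi$ is $S5$-valid under the $\KM$-reading. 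The only heavy remaining piece, the $S5$-completeness of the proposed system under that reading, is the part the paper itself offloads to the cited completeness of $\mathsf{SMLMSK}$; your Henkin sketch with $\mathtt{K_{wh}toK}^0$ as $\exists$-elimination is the standard way to prove it, but you could equally discharge it by citation, and your syntactic detours in the soundness direction (deriving $\mathtt{KtoK_{wh}}$ inside $\mathbf{S4.2}^{\KWH}$ and then necessitating) are harmless but unnecessary once the semantic characterization is in place.
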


It is also not hard to check that this system is equivalent to the system $\mathsf{SMLMSK}$ presented in \cite{Wang2017-WANANF}, a system in the language $\LL(\KM)$ which is sound and strongly complete w.r.t.\ the class of $S5$-models.

Hence, conceptually, the above theorem says that no matter which account of knowledge-wh we choose, it makes no difference for our agent:
the agent always believes that her knowledge-wh behaves in exactly the same way as MS-knowledge, and the logic for the underlying propositional knowledge is as strong as $\mathbf{S5}$.
In such a logic, of course knowledge-wh is monotone and satisfies positive introspection;
moreover, it even satisfies \emph{negative introspection}: $\neg \KWH^x \phi \to \K \neg \KWH^x \phi$ is in $\mathbf{S4.2}^{\KWH}_\mathsf{B}$ for all $\KWH \in \{\TBM, \TBMF, \KM, \KMF\}$.
On the other hand, all the subtle differences among the different accounts of knowledge-wh, generated from the gap between mere true belief and knowledge, as well as the peculiar behavior of the FS condition, are all invisible for the agent in question.

\section{Conclusion}
In this paper, we studied four bundled operators: $\TBM$, $\TBMF$, $\KM$ and $\KMF$, which correspond to the four different accounts of knowledge-wh.
We axiomatized the logics which take them (as well as $\K$) as primitive modalities on the class of $S4.2$-constant-domain models, and compared the ways we reason about knowledge-wh in different logics.

There many potential future works that can be done based on our work.

For example, we can further study the four bundled operators introduced in this paper.
We have only studied their behavior on $S4.2$-models, which characterize knowledge and belief in a highly idealized way;
our study of the obtained logics is also far from complete.
Hence, it seems interesting to study the logics obtained in this paper in greater detail, 
or to study the behavior of the bundled operators on other reasonable models for knowledge and belief (of course, we need not confine ourselves to Kripke models).
This may offer us a deeper understanding of the different accounts of knowledge-wh,
and may eventually help us decide which account is indeed the right one.

Moreover, the kind of step-by-step proof method applied in this paper can be generalized to study other complex epistemic notion.
For example, there are cases where it is better to understand knowledge-wh in terms of mention-all knowledge,
and there are also various competing accounts of these kinds of knowledge-wh, e.g.\
the \emph{weakly exhaustive} reading (first proposed in \cite{Karttunen1977-KARSAS}), the \emph{strongly exhaustive} reading (first proposed in \cite{groenendijk1982semantic}), and the \emph{intermediately exhaustive} reading (first raised, but soon rejected, in \cite{groenendijk1982semantic}, and later proposed again in \cite{spector2005exhaustive}), which can be formalized as $\forall x (\phi(x) \to \K \phi(x))$, $\forall x (\K \phi(x) \vee \K \neg \phi(x))$ and $\forall x (\phi(x) \to \K \phi(x)) \wedge \forall x (\B \phi(x) \to \phi(x))$, respectively.
Using the technique developed in this paper, we can easily pack these complex notions into bundled operators, and study their behavior.

Speaking on a more general level, the step-by-step method used in this paper can at least be generalized to any logic equipped with a set of ordinary modal operators $\{\Box_a\}_{a \in \tau}$ plus a set of bundled operators of the form $\blacksquare^x \phi := \exists x \alpha [\phi / p] \wedge \forall x \beta [\phi/p]$,
where $\alpha$ and $\beta$ are propositional modal formulas containing only one propositional symbol $p$, boolean connectives and operators in $\{\Box_a\}_{a \in \tau}$.
Our trick works no matter how complicated the structures of $\alpha$ and $\beta$ are, so a great deal of complex first-order modal notions can be handled in this way.

\bibliographystyle{eptcs}
\bibliography{generic}

\section*{Acknowledgement}
I owe the very idea of studying knowledge-wh with FS using bundled operators to Yanjing Wang, and 
I would also like to thank him for his insightful advice on this paper.

I would also like to thank Yimei Xiang for introducing works concerning MS-knowledge and FS in linguistics to me,
and Xun Wang for discussing matters concerning the $\exists\Box$-bundled fragment with me.
I should also thank three anonymous reviewers from TARK 2023 for their helpful advice on the writing of this paper,
and for identifying the typos and grammar mistakes in the paper.

Finally, I would like to thank the support of NSSF grant 19BZX135.

\appendix
\section{Appendix}

In the appendix, we show how to prove theorem \ref{completeness}, Proposition \ref{PI2} and Proposition \ref{MONO2}.

First, we consider theorem \ref{completeness}.
We only prove the case for $\mathbf{S4.2}^{\KMF}$, since the other cases can be proved in a similar way.
Moreover, for most of the time, we will be working in the language $\LL(\KMF)$, since our proof can easily be generalized to the case of $\LL_\approx(\KMF)$ with the help of some slight modifications.
We will demonstrate how to do so along the proof.

First, we check that the soundness result holds.

\begin{proposition}
$\mathbf{S4.2}^{\KMF}$ is sound w.r.t.\ the class of $S4.2$-constant-domain models.
\end{proposition}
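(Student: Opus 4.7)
The plan is a routine validity check, schema by schema and rule by rule, on an arbitrary $S4.2$-constant-domain model $\M = (W, D, R, \rho)$. The standard $\mathbf{S4.2}^\K$ axioms (propositional tautologies, $\mathtt{K}$, $\mathtt{T}$, $\mathtt{4}$, and the $.2$-axiom $\DK \K \phi \to \K \DK \phi$) are validated by reflexivity, transitivity and strong convergence respectively, while necessitation for $\K$ and modus ponens preserve validity as usual. This leaves the two bundle axioms and the two bundle rules.

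For $\mathtt{K_{wh}toFS}$, if $\M, w, \sigma \vDash \KMF^x \phi$ then clause (ii) of the semantics gives $\M, w, \sigma[x \mapsto b] \vDash \B \phi \to \phi$ for every $b \in D$; instantiating $b := \sigma(y)$ and using that $y$ is admissible for $x$ in $\phi$ yields $(\B \phi \to \phi)[y/x]$ at $(w, \sigma)$.

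The main obstacle is $\mathtt{BtoBK_{wh}}: \B \phi[y/x] \to \B \KMF^x \phi$, which crucially exploits strong convergence. A short computation on the frame shows (a) if $w \RB v$ and $vRv'$ then $w \RB v'$, and (b) if $w \RB v_1$ and $w \RB v_2$ then $v_1 \RB v_2$; in particular $v \RB v$ for every $\RB$-successor $v$ of $w$. Now fix any $v$ with $w \RB v$. Clause (i) of $\KMF^x \phi$ at $(v, \sigma)$ is witnessed by $a := \sigma(y)$: by (a), for every $v'$ with $vRv'$ we have $w \RB v'$, so $\phi$ holds at $(v', \sigma[x \mapsto \sigma(y)])$ by the hypothesis, and hence $\K \phi$ at $(v, \sigma[x \mapsto \sigma(y)])$. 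For clause (ii), fix $b \in D$ and assume $\B \phi$ at $(v, \sigma[x \mapsto b])$; since $v \RB v$ by (b), $\phi$ at $(v, \sigma[x \mapsto b])$ follows at once. Note that mere weak convergence does not suffice to obtain $v \RB v$, matching the remark in the paper that this axiom separates the two notions.

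The two rules $\mathtt{K_{wh}toK}$ and $\mathtt{FS\&KtoK_{wh}}$ are handled uniformly via the side condition $x \notin \bigcup_i FV(\psi_i)$. Suppose the premise of $\mathtt{K_{wh}toK}$ is valid; consider any descending chain $w_0 R w_1 R \cdots R w_n$ and assignment $\sigma$ at which each $\psi_i$ holds together with $\KMF^x \phi$ at $w_n$. Clause (i) supplies a witness $a$, and because $x$ is free in none of the $\psi_i$ the same chain satisfies all the $\psi_i$ under the modified assignment $\sigma[x \mapsto a]$; the premise then forces $\neg \K \phi$ at $(w_n, \sigma[x \mapsto a])$, a contradiction. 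The rule $\mathtt{FS\&KtoK_{wh}}$ is analogous: a failure of the conclusion provides $\K \phi[y/x]$ at $w_n$, which is clause (i) with witness $\sigma(y)$, while clause (ii) is obtained by applying the premise under $\sigma[x \mapsto b]$ for each $b \in D$. All other verifications are straightforward induction on the semantic clauses.
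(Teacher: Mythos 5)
Your proof is correct and follows essentially the same route as the paper's: the paper likewise verifies only $\mathtt{BtoBK_{wh}}$ and the bundle rules, arguing via the derived relation $\RB$ (your facts (a) and (b) are exactly the transitivity and euclideanness of $\RB$ that the paper invokes by saying ``$(W,\RB)$ is $KD45$, so $v$ is $\RB$-reflexive''), and it handles $\mathtt{FS\&KtoK_{wh}}$ by the same chain-plus-side-condition contradiction. One small correction: your parenthetical locates the use of strong convergence in the wrong place --- $v \RB v$ for $\RB$-successors $v$ of $w$ already follows from reflexivity and transitivity alone; strong convergence is really needed in the final step you leave implicit, namely passing from ``$\KMF^x\phi$ holds at every $\RB$-successor of $w$'' back to $\B \KMF^x\phi = \DK\K\KMF^x\phi$ at $w$, i.e.\ in making the $\RB$-box reading of $\B$ coincide with its official $\DK\K$ definition (this is also exactly why the axiom fails on merely weakly convergent frames).
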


\begin{proof}
We only prove that $\mathtt{BtoBK_{wh}}$ is valid on the class of $S4.2$-constant-domain models,
and $\mathtt{FS\&KtoK_{wh}}$ preserves validity on such models.

For $\mathtt{BtoBK_{wh}}$:

Let $\M = (W, R, D, \rho)$ be a $S4.2$-model, let $w \in W$ be arbitrary, and let $\sigma$ be an arbitrary assignment.
Assume that $\M,w,\sigma \vDash \B \phi[y/x]$.
Then, for all $v \in W$ s.t. $w \RB v$, $\M,v,\sigma \vDash \phi[y/x]$.

Then, let $v \in W$ be arbitrary, and assume that $w \RB v$.

We first show that $\M,v,\sigma \vDash \K \phi[y/x]$.
This is clear, since for all $u \in W$ s.t. $vRu$, it is easy to check that $w \RB u$,
and thus $\M,u,\sigma \vDash \phi[y/x]$.

Then, we show that for all $a \in D$, $\M,v,\sigma[x \mapsto a] \vDash \B \phi \to \phi$.
This is also clear: since $(W, \RB)$ is $KD45$, $v$ is $\RB$-reflexive.

Hence, it is easy to see that $\M,v,\sigma \vDash \KMF^x \phi$,
and thus $\M,w,\sigma \vDash \B \KMF^x \phi$.

For $\mathtt{FS\&KtoK_{wh}}$:

Let $\M = (W, D, R, \rho)$ be an arbitrary $S4.2$-model, and assume that 
$\psi_0 \to \K(\psi_1 \to \cdots \K(\psi_n \to (\B \phi \to \phi)) \cdots )$ is valid on $\M,w$, where $n$ is an arbitrary natural number;
and let $x$ be an variable s.t. $x \notin \bigcup_{i \leq n} FV(\psi_i)$.

Then, let $\sigma$ be an arbitrary assignment, and suppose (towards a contradiction) that $\M,w,\sigma \not \vDash \psi_0 \to \K(\psi_1 \to \cdots \K(\psi_n \to (\K \phi[y/x] \to \KMF^x \phi)) \cdots )$.
Then, there is some $w_0, w_1, ..., w_n \in W$,
s.t. $w = w_0 R w_1 R \cdots R w_n$,
$\M,w_i,\sigma \vDash \psi_i$ for all $i \leq n$,
and $\M,w_n,\sigma \not \vDash \K \phi[y/x] \to \KMF^x \phi$.
By the validity of $\psi_0 \to \K(\psi_1 \to \cdots \K(\psi_n \to (\B \phi \to \phi)) \cdots )$,
and since $x \notin \bigcup_{i \leq n} FV(\psi_i)$,
for all $a \in D$,
$\M,w_n,\sigma[x \mapsto a] \vDash \B \phi \to \phi$.
But then, since $\M,w_n, \sigma \vDash \K \phi[y/x]$,
$\M,w_n, \sigma[x \mapsto \sigma(y)] \vDash \K \phi$, and thus it should follow that $\M,w_n,\sigma \vDash \KMF^x \phi$,
causing a contradiction.
\end{proof}

It is also not hard to check that $\mathbf{S4.2}^{\KMF}$ has the following inner theorems, which will be used in our completeness proof.

\begin{tabular}{ll}
$\mathtt{NBK_{wh}toBNK_{wh}}$ & $\DB \KMF^x \phi \to \B \KMF^x \phi$\\
$\mathtt{BK_{wh}toK_{wh}B}$ & $\B \KMF^x \phi \to \KMF^x \B \phi$\\
$\mathtt{R}^{\KMF}$ & $\KMF^x \phi \leftrightarrow \KMF^y \phi[y/x]$ (where $y$ does not appear in $\phi$)\\
\end{tabular}

Now, we are ready to prove the completeness theorem.

As preparation, we first define the language $\LL^+(\KMF)$, which is obtained by adding countably many new variables to $\LL(\KMF)$.
We use $\mathbf{X}^+$ to denote the set of variables of $\LL^+(\KMF)$.

Then, we use a step-by-step method to prove the completeness theorem.
We first define the notion of a \emph{network}.
Note that when constructing such networks, the states will all be taken from a set of states $\{w_i \mid i \in \omega\}$,
which we fix in advance.

\begin{definition}
A \emph{network} is a triple $\N = (W, R, \nu)$, where
\begin{itemize}
\item $\{w_0\} \subseteq W \subseteq \{w_i \mid i \in \omega\}$;
\item $R \subseteq W^2$, and $(W, R)$ forms a tree where $w_0$ is the root;
\item $\nu$ assigns each element in $W$ a set of $\LL^+(\KMF)$-formulas.
\end{itemize}
\end{definition}

We also define the following two properties for the formula sets in a network:

\begin{definition}{($\mathtt{MS}$-property)}
An $\LL^+(\KMF)$-formula set $\Delta$ has \emph{$\mathtt{MS}$-property}, iff for all $\phi \in \LL^+(\KMF)$ and $x \in \mathbf{X}^+$, if $\KMF^x \phi \in \Delta$, then there is some $y \in \mathbf{X}^+$ s.t. $\K \phi [y/x] \in \Delta$.
\end{definition}

\begin{definition}{($\mathtt{FS}$-property)}
An $\LL^+(\KMF)$-formula set $\Delta$ has \emph{$\mathtt{FS}$-property}, iff for all $\phi \in \LL^+(\KMF)$ and $x, y \in \mathbf{X}^+$, if $\neg \KMF^x \phi, \K \phi[y/x] \in \Delta$, then there is some $z \in \mathbf{X}^+$ s.t. $(\B \phi \wedge \neg \phi)[z/x] \in \Delta$.
\end{definition}

Then, we define the notion of coherence and saturation for networks:

\begin{definition}
A network $\N = (W, R, \nu)$ is \emph{coherent}, iff the following conditions are satisfied:
\begin{itemize}
\item[(i)] $W$ is finite;
\item[(ii)] For all $w \in W$, $\nu(w)$ is $\mathbf{S4.2}^{\KMF}$-consistent; and for all $w \in W \setminus \{w_0\}$, $\nu(w)$ is finite;
\item[(iii)] For all $w, v \in W$ s.t. $wRv$, there is some $\psi$ s.t. $\vdash \psi \leftrightarrow \bigwedge \nu(v)$ and $\DK \psi \in \nu(w)$;
\item[(iv)] There are countably many variables in $\mathbf{X}^+$ which do not appear in $\nu(w)$ for any $w \in W$.
\end{itemize}
\end{definition}

\begin{definition}
A network $\N = (W, R, \nu)$ is \emph{saturated}, iff for all $w \in W$ and $\phi \in \LL^+(\KMF)$, the following holds:
\begin{itemize}
\item[(i)] $\nu(w)$ is a MCS of $\LL^+(\KMF)$-formulas;
\item[(ii)] If $\K \phi \in \nu(w)$, then for all $v \in W$ s.t. $wRv$, $\phi \in \nu(v)$;
\item[(iii)] If $\DK \phi \in \nu(w)$, then there is some $v \in W$ s.t. $wRv$ and $\phi \in \nu(v)$;
\item[(iv)] $\nu(w)$ has the $\mathtt{MS}$-property;
\item[(v)] $\nu(w)$ has the $\mathtt{FS}$-property.
\end{itemize}
\end{definition}

Then, corresponding to the requirements of saturation, we also introduce the following notion of \emph{defects}:

\begin{definition}
The possible kinds of \emph{defects} we may find on a state on a $w \in W$ in a network $\N = (W, R, \nu)$ are as follow:
\begin{itemize}
\item[(d1)] $\phi \notin \nu(w)$ and $\neg \phi \notin \nu(w)$
\item[(d2)] $\K \phi \in \nu(w)$, but there is some $v \in W$ s.t. $wRv$ and $\phi \notin \nu(v)$
\item[(d3)] $\DK \phi \in \nu(w)$, but there is no $v \in W$ s.t. $wRv$ and $\phi \in \nu(v)$
\item[(d4)] $\KMF^x \phi \in \nu(w)$, but there is no $y \in \mathbf{X}^+$ s.t. $\K \phi[y/x] \in \nu(w)$
\item[(d5)] $\neg \KMF^x \phi, \K \phi[y/x] \in \nu(w)$, but there is no $z \in \mathbf{X}^+$ s.t. $(\B \phi \wedge \neg \phi)[z/x] \in \nu(w)$
\end{itemize}
where $w \in \{w_i \mid i \in \omega\}$, $\phi \in \LL^+(\KMF)$ and $x \in \mathbf{X}^+$.
\end{definition}

Then, we prove the \emph{repair lemma},
which shows how to repair defects in a coherent network, while maintaining its coherence.

\begin{lemma}{(Repair lemma)}\label{repair}
For any coherent network $\N = (W, R, \nu)$ and any defect $(d)$ of $\N$,
then there is a coherent network $\N' = (W', R', \nu')$ s.t. $W \subseteq W'$, $R \subseteq R'$, 
$\nu(w) \subseteq \nu'(w)$ for all $w \in W$, and $\N'$ does not has $(d)$.
\end{lemma}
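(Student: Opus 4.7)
The plan is to do a case analysis on the defect type $(d)$. In every case except (d3) the repair amounts to adding a single formula $\chi$ to $\nu(w)$ for some node $w \in W$; to preserve clause (iii) of coherence I would propagate the change up the unique tree path $w_0 R w_1 R \cdots R w_k = w$ from the root, letting $\psi_{i+1}$ denote the witness formula given by (iii) for the edge $w_i R w_{i+1}$, and adding to each $\nu(w_i)$ with $i < k$ the single formula
$$\DK\bigl(\psi_{i+1} \wedge \DK\bigl(\psi_{i+2} \wedge \cdots \wedge \DK(\psi_k \wedge \chi) \cdots \bigr)\bigr).$$
Thus the whole repair reduces to one global consistency statement: that $\nu(w_0)$ together with the outermost such formula (at level $i=0$) is consistent. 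Given this, a standard model-extraction argument delivers consistency at every intermediate level $\nu(w_i)$ automatically, so clause (ii) is preserved along the entire path. Clauses (i) and (iv) are then easy to check, since no new states are introduced (except in (d3)) and each repair consumes at most one fresh variable.

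First I would handle (d3), which is immediate: pull a fresh state $w'$ from the reserve pool $\{w_i \mid i \in \omega\}$, adjoin it as an $R$-child of $w$, and set $\nu(w') = \{\phi\}$; clause (iii) holds with $\psi = \phi$ thanks to the given $\DK\phi \in \nu(w)$. Next I would handle the propositional cases (d1) and (d2). In (d1) I would pick $\chi \in \{\phi, \neg\phi\}$ so that the outermost lifted $\DK$-formula stays consistent with $\nu(w_0)$; at least one side works, because otherwise both would entail $\nu(w_0) \vdash \neg \DK(\psi_1 \wedge \cdots \wedge \DK\psi_k \cdots )$, contradicting the chain of $\DK$-witnesses already present in $\nu(w_0)$. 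In (d2) I take $\chi = \phi$ itself, and the update is actually forced by the old chain: since $\K\phi \in \nu(w_i)$ (where $w_i$ is the parent of the defective successor) we have $\psi_i \to (\DK\psi_{i+1} \leftrightarrow \DK(\psi_{i+1} \wedge \phi))$, so the new outermost formula is provably equivalent to the one already derivable from $\nu(w_0)$.

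The genuinely novel cases are (d4) and (d5), and they are exactly what the rules $\mathtt{K_{wh}toK}$ and $\mathtt{FS\&KtoK_{wh}}$ are tailored for. For (d4) I would pick, via clause (iv), a variable $z$ not occurring anywhere in the current network, and set $\chi = \K\phi[z/x]$. If the outermost lifted formula were inconsistent with $\nu(w_0)$, compactness would give a finite conjunction $\psi_0$ of formulas in $\nu(w_0)$ with $\vdash \psi_0 \to \K(\psi_1 \to \cdots \K(\psi_k \to \neg\K\phi[z/x]) \cdots )$; since $z \notin \bigcup_{i \leq k} FV(\psi_i)$, applying $\mathtt{K_{wh}toK}$ (after the alpha-renaming afforded by $\mathtt{R}^{\KMF}$ identifying $\KMF^z\phi[z/x]$ with $\KMF^x\phi$) yields $\vdash \psi_0 \to \K(\psi_1 \to \cdots \K(\psi_k \to \neg \KMF^x\phi) \cdots )$, and combining with the $\DK$-chain derivable from $\nu(w_0)$ forces $\nu(w_0) \vdash \DK(\psi_1 \wedge \cdots \wedge \DK(\psi_k \wedge \neg\KMF^x\phi) \cdots )$, which is absurd because $\KMF^x\phi \in \nu(w_k)$ makes $\psi_k \wedge \neg\KMF^x\phi$ inconsistent. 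Case (d5) runs symmetrically with $\chi = (\B\phi \wedge \neg\phi)[z/x]$: here the relevant conclusion of $\mathtt{FS\&KtoK_{wh}}$ has the shape $\K\phi[y/x] \to \KMF^x\phi$, and the premise $\K\phi[y/x] \in \nu(w_k)$ baked into the defect's hypothesis is precisely what, together with $\neg\KMF^x\phi \in \nu(w_k)$, delivers the contradiction.

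The step I expect to require the most care is the bookkeeping inside (d4) and (d5): aligning the fresh variable $z$ with the bound variable in the rule schema, verifying that the nested chain of $\K$ and $\DK$ operators matches the exact syntactic shape required by the rules, and confirming that the repeated alpha-renamings (via $\mathtt{R}^{\KMF}$) stay harmless as we telescope back up the path. Everything else — finiteness of the new $\nu(w_i)$'s, preservation of the reservoir of unused variables in $\mathbf{X}^+$, and the tree structure of $R$ — is routine.
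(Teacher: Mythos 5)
Your proposal is correct and follows essentially the same route as the paper's proof: a case split on the five defect types, repair by adding one formula at the defective node and telescoping $\DK$-prefixed witness formulas back up the unique tree path to the root, with consistency in cases (d4) and (d5) secured by a reductio through $\mathtt{K_{wh}toK}$ and $\mathtt{FS\&KtoK_{wh}}$ (modulo the $\mathtt{R}^{\KMF}$ renaming) against the $\DK$-chain derivable from $\nu(w_0)$. Your explicit remarks on reducing everything to consistency at the root and on the finite-conjunction step before applying the rules are, if anything, slightly more careful than the paper's own write-up.
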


\begin{proof}
Let $\N = (W, R, \nu)$ be a coherent network, and assume that $\N$ has a defect $(d)$ for some $w_m \in W$ and $\phi \in \LL^+(\KMF)$.

Since $(W, R)$ forms a tree where $w_0$ is the root, there is a unique path $w_0 = v_0 R v_1 R \cdots R v_n = w_m$ in $\N$ for some $n \in \omega$.
Then, since $\N$ is coherent, for all $1 \leq i \leq n$, let $\psi_i$ stand for the formula s.t. $\vdash \psi_i \leftrightarrow \bigwedge \nu(v_i)$ and $\DK \psi_i \in \nu(v_{i-1})$.
Then, it is easy to see that 
$$\nu(v_0) \vdash \DK(\psi_1 \wedge \DK (\psi_2 \wedge \cdots \DK(\psi_{n-1} \wedge \DK \psi_n) \cdots))$$

We then consider five cases.\\

Case 1: $(d)$ is of the kind $(d1)$.
That is, $\phi \notin \nu(v_n)$ and $\neg \phi \notin \nu(v_n)$.
Then, it is easy to check that 
$$\nu(v_0) \vdash \DK(\psi_1 \wedge \DK (\psi_2 \wedge \cdots \DK(\psi_n \wedge \phi) \cdots )) \vee \DK(\psi_1 \wedge \DK (\psi_2 \wedge \cdots \DK(\psi_n \wedge \neg \phi) \cdots ))$$
Then, at least one of the disjuncts is consistent with $\nu(v_0) = \nu(w_0)$.
We only consider the case where the former is consistent with $\nu(w_0)$, since the other case is similar.
In this case, let 
\begin{align*}
\nu' = & \{(w, \nu(w)) \mid w \neq v_i \text{ for all } i \leq n\}\\
\cup &\{(v_n, \nu(v_n) \cup \{\phi\})\}\\
\cup &\{(v_i, \nu(v_i) \cup \{\DK(\psi_{i+1} \wedge \cdots \DK(\psi_n \wedge \phi) \cdots )\}) \mid i < n\}
\end{align*}
and let $\N' = \lr{W, R, \nu'}$.
It is then easy to check that $\N'$ is coherent, and does not have the defect $(d)$.\\

Case 2: $(d)$ is of the kind $(d2)$.
That is, $\K \phi \in \nu(v_n)$, but there is some $u \in W$ s.t. $v_nRu$ and $\phi \notin \nu(u)$.
Since $\N$ is coherent, there is some $\psi_u$ s.t. $\vdash \psi_u \leftrightarrow \bigwedge \nu(u)$ and $\DK \psi_u \in \nu(v_n)$.
Hence, it is easy to check that 
$$\nu(v_0) \vdash \DK(\psi_1 \wedge \DK(\psi_2 \wedge \cdots \DK(\psi_n \wedge \DK(\psi_u \wedge \phi)) \cdots ))$$
Then, let 
\begin{align*}
\nu' = & \{(w, \nu(w)) \mid w \neq v_i \text{ for all } i \leq n\}\\
\cup &\{(u, \nu(u) \cup \{\phi\})\}\\
\cup &\{(v_i, \nu(v_i) \cup \{\DK(\psi_{i+1} \wedge \cdots \DK(\psi_n \wedge \phi) \cdots )\}) \mid i \leq n\}
\end{align*}
It is easy to check that $\N' = (W, R, \nu')$ is still coherent, and does not have the defect $(d)$.\\

Case 3: $(d)$ is of the kind $(d3)$.
That is, $\DK \phi \in \nu(v_n)$, but there is no $u \in W$ s.t. $v_nRu$ and $\phi \in \nu(u)$.
Since $W$ is finite, there is some $\{w_i \mid i \in \omega\} \setminus W \neq \emptyset$.
Then, let $w_k$ be the element in $\{w_i \mid i \in \omega\} \setminus W$ with the least index number,
and let $W' = W \cup \{w_k\}$, $R' = R \cup \{(v_n, w_k)\}$, and $\nu' = \nu \cup \{(w_k, \{\phi\})\}$.
It is easy to check that $\N = (W', R', \nu')$ is coherent, but does not have $(d)$.\\

Case 4: $(d)$ is of the kind $(d4)$.
That is, $\KMF^x \phi \in \nu(v_n)$, but there is no $y \in \mathbf{X^+}$ s.t. $\K \phi[y/x] \in \nu(v_n)$.
Then, let $y \in \mathbf{X^+}$ be a variable that does not appear in $\nu(w)$ for any $w \in W$,
and suppose (towards a contradiction) that 
$$\nu(v_0) \vdash \K(\psi_1 \to \K(\psi_2 \to \cdots \K(\psi_n \to \neg \K \phi[y/x]) \cdots ))$$
Then, by $\mathtt{K_{wh}toK}$ (and $\mathtt{R}^{\KMF}$), we have
$$\nu(v_0) \vdash \K(\psi_1 \to \K(\psi_2 \to \cdots \K(\psi_n \to \neg \KMF^x \phi) \cdots ))$$
which contradicts the fact that $\nu(v_0) = \nu(w_0)$ is consistent.
Hence, $\DK(\psi_1 \wedge \DK(\psi_2 \wedge \cdots \DK(\psi_n \wedge \K \phi[y/x]) \cdots ))$ is consistent with $\nu(v_0) = \nu(w_0)$.
Hence, let 
\begin{align*}
\nu' = & \{(w, \nu(w)) \mid w \neq v_i \text{ for all } i \leq n\}\\
\cup &\{(v_n, \nu(v_n) \cup \{\K \phi[y/x]\})\}\\
\cup &\{(v_i, \nu(v_i) \cup \{\DK(\psi_{i+1} \wedge \cdots \DK(\psi_n \wedge \K \phi[y/x]) \cdots )\}) \mid i < n\}
\end{align*}
It is easy to check that $\N' = (W, R, \nu)$ is still coherent, and does not have the defect $(d)$.\\

Case 5: $(d)$ is of the kind $(d5)$.
That is, $\neg \KMF^x \phi \in \nu(v_n)$ and $\K \phi[y/x] \in \nu(v_n)$, but there is no $z \in \mathbf{X^+}$ s.t. $(\B \phi \wedge \neg \phi)[z/x] \in \nu(v_n)$.
Then, let $z \in \mathbf{X^+}$ be a variable that does not appear in $\nu(w)$ for any $w \in W$,
and suppose (towards a contradiction) that 
$$\nu(v_0) \vdash \K(\psi_1 \to \K(\psi_2 \to \cdots \K(\psi_n \to (\B \phi[z/x] \to \phi[z/x])) \cdots ))$$
Then, by $\mathtt{FS\&KtoK_{wh}}$ (and $\mathtt{R}^{\KMF}$), we have
$$\nu(v_0) \vdash \K(\psi_1 \to \K(\psi_2 \to \cdots \K(\psi_n \to (\K \phi[y/x] \to \KMF^x \phi)) \cdots ))$$
which contradicts the fact that $\nu(v_0) = \nu(w_0)$ is consistent.
Hence, $\DK(\psi_1 \wedge \DK(\psi_2 \wedge \cdots \DK(\psi_n \wedge (\B \phi \wedge \neg \phi)[z/x]) \cdots ))$ is consistent with $\nu(v_0) = \nu(w_0)$.
Hence, let 
\begin{align*}
\nu' = & \{(w, \nu(w)) \mid w \neq v_i \text{ for all } i \leq n\}\\
\cup &\{(v_n, \nu(v_n) \cup \{(\B \phi \wedge \neg \phi)[z/x]\})\}\\
\cup &\{(v_i, \nu(v_i) \cup \{\DK(\psi_{i+1} \wedge \cdots \DK(\psi_n \wedge (\B \phi \wedge \neg \phi)[z/x]) \cdots )\}) \mid i < n\}
\end{align*}
It is easy to check that $\N' = (W, R, \nu)$ is still coherent, and does not have the defect $(d)$.
\end{proof}

Then, we can easily show that every coherent network can be extended into a saturated network.

\begin{lemma}
For any coherent network $\N = (W, R, \nu)$,
there exists a saturated network $\N' = (W', R', \nu')$ s.t. $W \subseteq W'$, $R \subseteq R'$ and $\nu(w) \subseteq \nu'(w)$ for all $w \in W$.
\end{lemma}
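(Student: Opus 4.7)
The plan is a standard step-by-step construction: build an $\omega$-chain of coherent networks $\N_0 \subseteq \N_1 \subseteq \N_2 \subseteq \cdots$ starting from $\N_0 = \N$, where at each stage I repair at most one defect using Lemma \ref{repair}, and then take $\N'$ to be the componentwise union of the sequence.

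To arrange this, I fix a dovetailing enumeration $(t_n)_{n \in \omega}$ of the countable set of all potential defect descriptions: tuples consisting of a state $w_i \in \{w_j : j \in \omega\}$, an $\LL^+(\KMF)$-formula $\phi$, possibly additional variables $x$ or $x, y \in \mathbf{X}^+$, and a defect-type tag in $\{(d1), (d2), (d3), (d4), (d5)\}$. Crucially, I arrange that every single description appears infinitely often in $(t_n)$, which is possible since the ambient set is countable. At stage $n+1$, inspect $t_n$; if it specifies an actual defect of $\N_n$ (so in particular the state named in $t_n$ must already belong to $W_n$), apply Lemma \ref{repair} to obtain $\N_{n+1}$; otherwise set $\N_{n+1} = \N_n$. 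Since the repair lemma preserves coherence, every $\N_n$ is coherent.

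Define $\N' = (W', R', \nu')$ by $W' = \bigcup_n W_n$, $R' = \bigcup_n R_n$, and $\nu'(w) = \bigcup_{n : w \in W_n} \nu_n(w)$ for each $w \in W'$. The extension conditions $W \subseteq W'$, $R \subseteq R'$, $\nu(w) \subseteq \nu'(w)$ are immediate. To verify saturation clause by clause: consistency of each $\nu'(w)$ follows from consistency of the $\nu_n(w)$ and the fact that any finite subset already lies in some $\nu_n(w)$; maximality (clause (i)) follows because the (d1) description for $(w, \phi)$ recurs infinitely often, so at some stage at which $w \in W_n$ it must be processed, forcing $\phi$ or $\neg \phi$ into $\nu'(w)$. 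Clauses (ii)--(v) follow analogously from the infinite recurrence of (d2)--(d5) descriptions in the enumeration.

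The main obstacle, I expect, is the interaction between (d3) repairs, which add fresh states, and the enumeration: a naive enumeration indexed only over states currently in $W_n$ would miss later-added states. My fix is to enumerate once and for all over the fixed ambient set $\{w_i : i \in \omega\}$, so that defect descriptions for every potential state are pre-scheduled; a description whose state is not yet present is simply skipped at that stage and revisited later once the state has been added by some (d3) repair. A secondary subtlety is that the limit $\N'$ is \emph{not} coherent --- the finiteness conditions on $W$ and on each $\nu(w)$ are irrevocably lost in the union --- but saturation is a pointwise property that does not require coherence, so this loss is harmless. A final bookkeeping check, easily dispatched, is that fresh variables remain available at every finite stage to feed cases (d4) and (d5) of the repair lemma: clause (iv) of coherence is preserved at every finite stage because only finitely many new variables are introduced per repair, leaving countably many always untouched.
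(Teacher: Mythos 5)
Your proposal is correct and follows essentially the same route as the paper: iterate the repair lemma along an $\omega$-chain of coherent networks starting from $\N$ and take the componentwise union, observing that saturation of the limit does not require coherence. The only difference is scheduling --- the paper repairs the least-indexed defect currently present at each stage, whereas you use a dovetailing enumeration in which every defect description recurs infinitely often --- and your variant in fact spells out more carefully why defects that only arise later (e.g.\ at states added by (d3) repairs) are still eventually handled.
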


\begin{proof}
Let $\N = (W, R, \nu)$ be a coherent network.

It is not hard to see that there are only countably many possible defects.
Hence, we can enumerate them as $(d)_1$, $(d)_2$, $(d)_3$, \dots

Then, we define a countable sequence of networks $\N_i = (W_i, R_i, \nu_i)$ ($i \in \omega$) recursively as follow:

\begin{itemize}
\item $\N_0 = \N$;
\item Given a coherent network $\N_k$, let $(d)_m$ be the defect of $\N_k$ with the least index number (note that according to our definition of coherence, $\N_k$ necessarily has defects), and let $\N_{k+1} = (W_{k+1}, R_{k+1}, \\  \nu_{k+1})$ be a coherent network which does not has $(d)_m$, and also satisfies that $W_k \subseteq W_{k+1}$, $R_k \subseteq R_{k+1}$, 
$\nu_k(w) \subseteq \nu_{k+1}(w)$ for all $w \in W_k$.
The existence of such a network is guaranteed by lemma \ref{repair}.
\end{itemize}

Then, let $\N' = (W', R', \nu')$, where
\begin{itemize}
\item $W' = \bigcup_{i \in \omega} W_i$;
\item $R' = \bigcup_{i \in \omega} R_i$;
\item For all $w \in W$, $\nu'(w) = \bigcup_{i \geq k} \nu_i(w)$, where $k$ is the least number s.t. $w \in W_k$.
\end{itemize}

It is not hard to see that $\N'$ is a saturated network s.t. $W_0 \subseteq W$, $R_0 \subseteq R$ and $\nu_0(w) \subseteq \nu(w)$ for all $w \in W_0$.
\end{proof}

Then, we show how to induce a canonical model from a saturated network.

\begin{definition}
Given a saturated network $\N = (W, R, \nu)$, $\M^c_\N = (W_\N^c, R_\N^c, D_\N^c, \rho_\N^c)$ is the model induced from $\N$, where
\begin{itemize}
\item $W_\N^c = \{\nu(w) \mid w \in W\} \cup FC$,\\
where $FC = \{\Theta \mid \Theta \text{ is a MCS in } \LL^+(\KMF), \{\phi \mid \B \phi \in \nu(w_0)\} \subseteq \Theta\}$;\footnote{
$FC$ stands for \emph{F}inal \emph{C}luster.
In fact, we can show that for all $\Delta \in W^c$ and $\Theta \in FC$, $\Delta R^c \Theta$, which justifies our naming.
}
\item $D_\N^c = \mathbf{X^+}$;
\item $R_\N^c$ satisfies that for all $\Delta, \Theta \in W^c$, $\Delta R^c \Theta$ iff for all $\phi \in \LL^+(\KMF)$, if $\K \phi \in \Delta$, then $\phi \in \Theta$;
\item $\rho_\N^c$ satisfies that for all $\Delta \in W_\N^c$, $\bar{x} \in (D^c_\N)^{<\omega}$ and $P \in \mathcal{P}$, $\bar{x} \in \rho^c(P, \Delta)$ iff $P\bar{x} \in \Delta$.
\end{itemize}
We may drop the subscript $\N$ when the context is clear.
\end{definition}

\begin{remark}
If we are working in the language $\LL_\approx(\KMF)$, then we let $D^c_\N = \{[x] \mid x \in \mathbf{X^+}\}$, where $[x] = \{y \in \mathbf{X^+} \mid x \approx y \in \nu(w_0)\}$.
\end{remark}

We then show that a model induced from a saturated network is indeed $S4.2$,
and also has all the properties we need.

\begin{lemma}
For any saturated network $\N$, $\M^c_\N$ satisfies the following:
\begin{itemize}
\item[(i)] $\M^c_\N$ is an $S4.2$-model;
\item[(ii)] For all $\Delta \in W^c$ and $\DK \phi \in \Delta$, there is some $\Delta' \in W^c$ s.t. $\Delta R^c \Delta'$ and $\phi \in \Delta'$;
\item[(iii)] For all $\Delta \in W^c$, $\Delta$ has the $\mathtt{MS}$-property and the $\mathtt{FS}$-property.
\end{itemize}
\end{lemma}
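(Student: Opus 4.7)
The plan is to verify the three clauses separately, pivoting on the observation that the belief-content of every world in $W^c$ coincides with that of $\nu(w_0)$, so that any fixed $\Theta \in FC$ can serve as a common $R^c$-successor of every $\Delta \in W^c$.

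For (i), reflexivity and transitivity of $R^c$ are the standard canonical-model consequences of $\K \phi \to \phi$ and $\K \phi \to \K \K \phi$. For strong convergence I would first prove that $\{\phi \mid \B \phi \in \Delta\} = \{\phi \mid \B \phi \in \nu(w_0)\}$ for every $\Delta \in W^c$: when $\Delta = \nu(w)$ this uses the $\mathbf{S4.2}$-theorems $\B \phi \to \K \B \phi$ and $\neg \B \phi \to \K \neg \B \phi$ (both derivable from the $.2$ and $4$ axioms together with the definition $\B \phi = \DK \K \phi$), combined with the tree-transfer of $\K$-formulas guaranteed by saturation; when $\Delta \in FC$ it follows from positive and negative introspection for $\B$, which is $KD45$ in $\mathbf{S4.2}$. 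The $\mathbf{S4.2}$-theorem $\K \phi \to \B \phi$ then gives $\Delta R^c \Theta$ for every $\Delta \in W^c$ and every $\Theta \in FC$, and non-emptiness of $FC$ follows from $\neg \B \bot$ (provable from $\K \bot \leftrightarrow \bot$) together with the normality of $\B$ in $\mathbf{S4.2}$, which jointly make $\{\psi \mid \B \psi \in \nu(w_0)\}$ consistent. Fixing any such $\Theta$ then witnesses strong convergence at every $\Delta$.

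For (ii), when $\Delta = \nu(w)$ the saturation conditions directly supply a tree successor $v$ with $\phi \in \nu(v)$ and $\nu(w) R^c \nu(v)$. When $\Delta \in FC$, the set $\{\psi \mid \K \psi \in \Delta\} \cup \{\phi\}$ is consistent (since $\DK \phi \in \Delta$), and I extend it to an MCS $\Delta'$. To confirm $\Delta' \in FC$ it suffices to show $\Delta' \supseteq \{\chi \mid \B \chi \in \nu(w_0)\}$: if $\B \chi \in \nu(w_0)$ then the theorem $\B \chi \to \B \K \chi$ (derived from $\K \chi \to \K \K \chi$) yields $\K \chi \in \{\psi \mid \B \psi \in \nu(w_0)\} \subseteq \Delta$, hence $\chi \in \Delta'$.

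The main obstacle is (iii) for $\Delta \in FC$, because an arbitrary MCS extending $\{\psi \mid \B \psi \in \nu(w_0)\}$ need not carry witnesses. I would therefore strengthen the definition of $FC$ to also demand the $\mathtt{MS}$- and $\mathtt{FS}$-properties, and show that enough such MCSs exist by running the same step-by-step completion used for the saturated network: enumerate all potential defects of kinds $(d1)$, $(d4)$, $(d5)$, and at each stage either decide the next formula consistently or discharge the next defect by introducing a fresh variable as witness, drawn from the reserve guaranteed by coherence condition (iv). The delicate step is maintaining consistency when adding such a witness: if the candidate $\K \phi[y/x]$ (for $(d4)$) or $(\B \phi \wedge \neg \phi)[z/x]$ (for $(d5)$) were inconsistent with the currently accumulated finite information, then because the fresh variable does not occur elsewhere the hypothetical derivation could be pushed through $\mathtt{K_{wh}toK}$ or $\mathtt{FS\&KtoK_{wh}}$ to contradict the $\KMF^x \phi$ or $\neg \KMF^x \phi$ already committed to. The same completion procedure produces the $\Delta'$ of clause (ii) when $\Delta \in FC$, and for tree nodes clause (iii) is immediate from saturation conditions (iv) and (v).
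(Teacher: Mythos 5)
Your clauses (i) and (ii) follow essentially the paper's route: reflexivity and transitivity are the standard canonical facts, and strong convergence comes from showing that every $\Theta\in FC$ is a common $R^c$-successor of every $\Delta\in W^c$ (the paper derives this directly from $\K\phi\in\Delta\Rightarrow\B\phi\in\nu(w_0)\Rightarrow\phi\in\Theta$, using introspection for $\B$ in the $FC$ case, rather than via your equality of belief-sets, but the two are interchangeable). The genuine gap is in your treatment of (iii), which you rightly identify as the crux. You propose to restrict $FC$ to MCSs having the $\mathtt{MS}$- and $\mathtt{FS}$-properties and to manufacture such MCSs by re-running the witness-adding completion on top of $\{\psi\mid\B\psi\in\nu(w_0)\}$, drawing fresh variables ``from the reserve guaranteed by coherence condition (iv)''. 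But that reserve no longer exists at this stage: coherence (iv) is a property of the \emph{finite} approximations built during the repair process, whereas $\nu(w_0)$ in a \emph{saturated} network is a full MCS over $\LL^+(\KMF)$, so the base set $\{\psi\mid\B\psi\in\nu(w_0)\}$ already mentions every variable of $\mathbf{X}^+$ (it contains, e.g., $\B(Py\vee\neg Py)$ for every $y$). Consequently the freshness argument you invoke to push a hypothetical inconsistency through $\mathtt{K_{wh}toK}$ or $\mathtt{FS\&KtoK_{wh}}$ breaks down: the finite subset of the base set witnessing the inconsistency may itself contain the would-be witness variable, so the side condition $x\notin\bigcup_{i\leq n}FV(\psi_i)$ of those rules cannot be met for any choice of witness. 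The same difficulty then infects your construction of $\Delta'$ in clause (ii) once $FC$ is restricted.

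The paper avoids this entirely by keeping $FC$ as the set of \emph{all} MCSs extending $\{\psi\mid\B\psi\in\nu(w_0)\}$ and showing that each such $\Theta$ \emph{inherits} the witness properties from $\nu(w_0)$, whose witnesses were already secured during the network construction. For $\mathtt{MS}$: if $\KMF^x\phi\in\Theta$ then $\DB\KMF^x\phi\in\nu(w_0)$, hence $\B\KMF^x\phi\in\nu(w_0)$ by $\mathtt{NBK_{wh}toBNK_{wh}}$, hence $\KMF^x\B\phi\in\nu(w_0)$ by $\mathtt{BK_{wh}toK_{wh}B}$; the $\mathtt{MS}$-property of $\nu(w_0)$ yields a $y$ with $\K\B\phi[y/x]\in\nu(w_0)$, and $\B\phi\to\B\K\phi$ then puts $\K\phi[y/x]$ into $\Theta$. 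For $\mathtt{FS}$: if $\neg\KMF^x\phi\in\Theta$ then $\B\KMF^x\phi\notin\nu(w_0)$, so by the contrapositive of $\mathtt{BtoBK_{wh}}$ no $\B\phi[y/x]$, hence no $\K\phi[y/x]$, lies in $\Theta$, and the $\mathtt{FS}$-condition holds vacuously. Some argument of this ``pull back to $\nu(w_0)$'' kind is needed; a second Henkin-style pass cannot supply the witnesses.
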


\begin{proof}
Let $\M^c_\N$ be an arbitrary model induced from a saturated network $\N = (W, R, \nu)$.

For item (i):
By the definition of $R^c$ and the canonicity of $\mathtt{T}^\K$ and $\mathtt{4}^\K$, it is easy to see that $\M^c_\N$ is reflexive and transitive.

We then show that $\M^c_\N$ is strongly convergent.

Clearly $FC \neq \emptyset$,
since $\DB \top \in \nu(w_0)$.

Then, we show that for all $\Delta \in W^c$ and $\Theta \in FC$, $\Delta R^c \Theta$.
Let $\Delta \in W^c$ and $\Theta \in FC$ be arbitrary.
We consider two cases:

Case 1: there is some $w \in W$ s.t. $\Delta = \nu(w)$.
Let $\K \phi \in \nu(w)$ be arbitrary.
It is easy to see that $\nu(w_0) R^c \nu(w)$;
hence, $\DK \K \phi \in \nu(w_0)$, i.e. $\B \phi \in \nu(w_0)$.
Hence, by definition, $\phi \in \Theta$.
Thus, $\nu(w) R^c \Theta$.

Case 2: $\Delta \in FC$.
Let $\K \phi \in \Delta$ be arbitrary.
Then, since $\Delta \in FC$, $\DB \K \phi \in \nu(w_0)$, i.e. $\K \DK \K \phi \in \nu(w_0)$.
Then, by $\mathtt{T}^\K$, $\DK \K \phi \in \nu(w_0)$, i.e. $\B \phi \in \nu(w_0)$.
Hence, $\phi \in \Theta$ and
thus, $\Delta R^c \Theta$.

Therefore, $\M_\N^c$ is strongly convergent.\\

For item (ii):
Since $\N$ is saturated, we only need prove that for all $\Theta \in FC$ and $\DK \phi \in \Theta$, there is some $\Theta' \in W^c$ s.t. $\Theta R^c \Theta'$ and $\phi \in \Theta'$.

Let $\Theta \in FC$, $\DK \phi \in \Theta$ be arbitrary.
Then, since $\Theta \in FC$, $\DB \DK \phi \in \nu(w_0)$, i.e. $\K \DK \DK \phi \in \nu(w_0)$.
Hence, by $\mathtt{4}^\K$, $\K \DK \phi \in \nu(w_0)$, i.e. $\DB \phi \in \nu(w_0)$,
and thus, there is some $\Theta' \in FC$ s.t. $\phi \in \Theta'$.
Then, as we have already proved, $\Theta R^c \Theta'$.\\

For item (iii):
Again, since $\N$ is saturated, we only need to prove that every $\Theta \in FC$ has the $\mathtt{MS}$-property and the $\mathtt{FS}$-property.

Let $\Theta \in FC$ and $\phi \in \LL^+(\KWH)$ be arbitrary.

First, assume that $\KWH^x \phi \in \Theta$.
Then, $\DB \KWH^x \phi \in \nu(w_0)$,
and thus, by $\mathtt{NBK_{wh}toBNK_{wh}}$, $\B \K^x \phi \in \nu(w_0)$.
Then, by $\mathtt{BK_{wh}toK_{wh}B}$, $\K^x \B \phi \in \nu(w_0)$.
Then, since $\N$ is saturated, $\nu(w_0)$ has the $\mathtt{MS}$-property, and thus there is some $y \in \mathbf{X^+}$ s.t. $\K \B \phi[y/x] \in \nu(w_0)$.
Hence, $\B \phi[y/x] \in \nu(w_0)$, and thus $\B \K \phi[y/x] \in \nu(w_0)$.
Hence, $\K \phi[y/x] \in \Theta$.

Next, assume that $\KWH^x \phi \in \Theta$.
Then, $\neg \B \KWH^x \phi \in \nu(w_0)$,
and thus for all $y \in \mathbf{X}^+$,
$\neg \B \phi[y/x] \in \nu(w_0)$ by $\mathtt{BtoBK_{wh}}$.
Hence, for all $y \in \mathbf{X}^+$, $\B \neg \B \phi[y/x] \in \nu(w_0)$,
and thus $\B \phi[y/x] \notin \Theta$.
\end{proof}

Then, it is routine to prove the truth lemma:
\begin{lemma}
For all $\M^c_\N$ induced from a saturated network $\N$,
for all $\Delta \in W^c$ and $\phi \in \LL^+(\KMF)$, $\M_\N^c,\Delta,\sigma^c \vDash \phi \iff \phi \in \Delta$,
where $\sigma^c$ is the assignment s.t. $\sigma^c(x) = x$ for all $x \in \mathbf{X}^+$.
\end{lemma}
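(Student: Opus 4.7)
The plan is to proceed by induction on the complexity of $\phi$, preceded by a standard substitution sublemma: for any $\phi$, any variable $y$ admissible for $x$ in $\phi$, and any assignment $\sigma$, $\M^c_\N, \Delta, \sigma \vDash \phi[y/x]$ iff $\M^c_\N, \Delta, \sigma[x \mapsto \sigma(y)] \vDash \phi$. Under the canonical assignment $\sigma^c$ where $\sigma^c(x) = x$, this specialises to $\M^c_\N, \Delta, \sigma^c \vDash \phi[y/x]$ iff $\M^c_\N, \Delta, \sigma^c[x \mapsto y] \vDash \phi$, which is what I will repeatedly invoke. The atomic case is immediate from the definition of $\rho^c$, the boolean cases follow from $\Delta$ being a MCS, and the $\K\phi$ case is standard: one direction uses the definition of $R^c$, the other uses item (ii) of the previous lemma (every $\DK\psi \in \Delta$ has an $R^c$-successor containing $\psi$) together with maximal consistency.

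The key case is $\KMF^x \phi$. For the ($\Leftarrow$) direction, suppose $\KMF^x \phi \in \Delta$. By the $\mathtt{MS}$-property (guaranteed by item (iii) of the previous lemma) there is some $y \in \mathbf{X}^+$ with $\K\phi[y/x] \in \Delta$; by the inductive hypothesis and the substitution sublemma, $\M^c_\N, \Delta, \sigma^c[x \mapsto y] \vDash \K\phi$, which establishes clause (i) of the semantic clause for $\KMF$. For clause (ii), pick any $b \in D^c = \mathbf{X}^+$; axiom $\mathtt{K_{wh}toFS}$ gives $(\B\phi \to \phi)[b/x] \in \Delta$, and the inductive hypothesis plus substitution then deliver $\M^c_\N, \Delta, \sigma^c[x \mapsto b] \vDash \B\phi \to \phi$. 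For the ($\Rightarrow$) direction I argue contrapositively: assume $\neg \KMF^x \phi \in \Delta$. Either (a) no $y$ satisfies $\K\phi[y/x] \in \Delta$, in which case the inductive hypothesis rules out every candidate in $D^c$ for clause (i); or (b) some such $y$ exists, and then the $\mathtt{FS}$-property yields a $z$ with $(\B\phi \wedge \neg\phi)[z/x] \in \Delta$, whence the inductive hypothesis and substitution show $\M^c_\N, \Delta, \sigma^c[x \mapsto z] \vDash \B\phi \wedge \neg\phi$, falsifying clause (ii).

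The main obstacle is ensuring that the substitution sublemma applies at every appeal, i.e.\ that the variables $y$ and $z$ produced by the $\mathtt{MS}$- and $\mathtt{FS}$-properties are admissible for $x$ in $\phi$. This is ultimately a bookkeeping matter traceable to the repair lemma, where witness variables were always drawn fresh from $\mathbf{X}^+$ (variables not appearing in any $\nu(w)$ at the time), so in particular they do not occur within the scope of any $\KMF^{(\cdot)}$ operator in $\phi$. Once this admissibility invariant is verified, the rest of the $\KMF^x$ case reduces cleanly to the inductive hypothesis, and the same pattern handles the $\B\phi$ abbreviation via the remark following the semantics that $\B$ is interpreted by $R_\B$ on $S4.2$-models.
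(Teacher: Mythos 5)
Your proof is correct and is exactly the routine induction the paper has in mind: the preceding lemma's items (ii) and (iii) exist precisely to feed your $\K$ and $\KMF^x$ cases, and the paper omits the argument as routine. The only spot needing slightly more care than you give it is the universal clause (ii) for an \emph{arbitrary} $b \in \mathbf{X}^+$ (and likewise the ``no witness'' subcase of your direction ($\Rightarrow$)): admissibility there is not secured by the freshness of the repair-lemma witnesses, since $b$ ranges over all variables, but by first renaming bound variables via $\mathtt{R}^{\KMF}$ so that the relevant instance of $\mathtt{K_{wh}toFS}$ (resp.\ of $\neg\K\phi[b/x]$) is available.
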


\begin{remark}
If we are working in the language $\LL_\approx(\KMF)$, then in the formulation of the above lemma, we let $\sigma^c$ be the assignment s.t. $\sigma^c(x) = [x]$ for all $x \in \mathbf{X}^+$.
\end{remark}

Finally, notice that for any $\mathbf{S4.2}^{\KMF}$-consistent set $\Gamma$ of $\LL(\KMF)$-formulas $\Gamma_0$,
$(\{w_0\}, \emptyset, \{(w_0, \Gamma_0)\})$ is a coherent network.
Hence, it can be extended into a saturated network $\N'$, from which we can induce a canonical model $\M^c_{\N'}$, such that $\M^c_{\N'},\nu'(w_0),\sigma^c \vDash \Gamma_0$.

Hence, we have the following completeness theorem:

\begin{theorem}
$\mathbf{S4.2}^{\KMF}$ (as well as $\mathbf{S4.2}_\approx^{\KMF}$) is sound and strongly complete w.r.t.\ the class of $S4.2$-constant-domain models.
\end{theorem}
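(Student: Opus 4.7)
The plan is to prove soundness and strong completeness separately, leveraging the apparatus of networks, the repair lemma, the canonical model construction, and the truth lemma developed above. Soundness for $\mathbf{S4.2}^{\KMF}$ reduces to checking that each axiom is valid on $S4.2$-constant-domain models and each rule preserves validity: the delicate cases $\mathtt{BtoBK_{wh}}$ and $\mathtt{FS\&KtoK_{wh}}$ have already been dispatched in the soundness proposition, while $\mathtt{K_{wh}toFS}$, $\mathtt{K_{wh}toK}$, and the standard $\mathbf{S4.2}^\K$ axioms are routine, and the identity axioms added in $\mathbf{S4.2}_\approx^{\KMF}$ are canonical.

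For strong completeness, given any $\mathbf{S4.2}^{\KMF}$-consistent set $\Gamma_0$ of $\LL(\KMF)$-formulas, I would first form the singleton network $\N_0 = (\{w_0\}, \emptyset, \nu_0)$ with $\nu_0(w_0) = \Gamma_0$ and verify that it is coherent: $W = \{w_0\}$ is finite, $\Gamma_0$ is consistent by hypothesis (and its possible infinitude is permitted precisely at $w_0$), condition (iii) is vacuous since $R = \emptyset$, and condition (iv) holds because $\mathbf{X}^+$ was designed to contain countably many fresh variables beyond those appearing in $\Gamma_0$. Next I would apply the extension lemma to obtain a saturated network $\N' \supseteq \N_0$ and induce the canonical model $\M^c_{\N'}$. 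The lemma on induced models guarantees that $\M^c_{\N'}$ is $S4.2$-constant-domain, that every $\DK$-formula in any state has a witness, and that every state has the $\mathtt{MS}$- and $\mathtt{FS}$-properties; the truth lemma then yields $\M^c_{\N'}, \nu'(w_0), \sigma^c \vDash \phi$ for every $\phi \in \Gamma_0$, exhibiting the required model. For $\mathbf{S4.2}_\approx^{\KMF}$ one substitutes the quotient $\{[x] \mid x \in \mathbf{X}^+\}$ with $[x] = \{y \mid x \approx y \in \nu'(w_0)\}$ for the domain and sets $\sigma^c(x) = [x]$, as indicated in the remarks; the identity axioms ensure that the quotient is coherent across accessible states and that the atomic clauses of the truth lemma go through.

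I expect no substantial obstacle at this final assembly stage, because the genuine difficulty has already been absorbed elsewhere: in the repair lemma, where Cases 4 and 5 exploit the rules $\mathtt{K_{wh}toK}$ and $\mathtt{FS\&KtoK_{wh}}$ to install $\mathtt{MS}$- and $\mathtt{FS}$-witnesses with fresh variables while preserving coherence, and in the construction of the final cluster $FC$ that forces strong convergence of the canonical frame via the inner theorem $\mathtt{NBK_{wh}toBNK_{wh}}$ together with $\mathtt{T}^\K$ and $\mathtt{4}^\K$. The present theorem is essentially the corollary that seeding the step-by-step process with the singleton coherent network around $\Gamma_0$ yields the desired countermodel; the only point demanding mild care is ensuring that the bookkeeping on unused variables in clause (iv) of coherence really does propagate through countably many repair steps, which follows because each step consumes at most one fresh variable.
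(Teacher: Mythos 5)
Your proposal is correct and follows essentially the same route as the paper: soundness is delegated to the validity checks for the axioms and rules, and strong completeness is obtained by seeding the step-by-step construction with the coherent singleton network $(\{w_0\},\emptyset,\{(w_0,\Gamma_0)\})$, extending it to a saturated network via the repair lemma, inducing the canonical model with the final cluster, and invoking the truth lemma (with the quotient domain $\{[x]\mid x\in\mathbf{X}^+\}$ for the $\approx$-variant). No substantive differences from the paper's argument.
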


Then, we consider Proposition \ref{PI2}.
The cases for $\TBM^x \phi$ and $\TBMF^x \phi$ are relatively easy, since it is easy to see that $x \notin FV(\phi)$, $\TBM^x \phi$ and $\TBMF^x \phi$ are equivalent to $\B \phi \wedge \phi$.
Hence, we only prove the following proposition here:

\begin{proposition}
The following equivalence holds:
\begin{center}
\begin{tabular}{lcl}
$\mathbf{S4.2}_\approx^{\KMF} \oplus \KMF^x \phi \to \K \KMF^x \phi$ & $=$ & $\mathbf{S4.2}_\approx^{\KMF} \oplus x \not \approx y \to (\B \phi \wedge \phi \to \K \phi)$
\end{tabular}
\end{center}
\end{proposition}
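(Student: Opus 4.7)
The goal is to show that extending $\mathbf{S4.2}_\approx^{\KMF}$ by $A := \KMF^x \phi \to \K \KMF^x \phi$ yields the same logic as extending by $B := x \not\approx y \to (\B \phi \wedge \phi \to \K \phi)$. It suffices to derive each extra axiom in the system extended by the other. Semantically, both axioms pick out precisely the $S4.2$-constant-domain models in which true belief coincides with knowledge whenever the domain has at least two elements (the singleton-domain case being vacuous), and this equivalence underpins both inclusions.

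For $(\supseteq)$, deriving $B$ from $A$, I plan to specialize $A$ to the formula $\psi := x \not\approx y \to \phi$, where $x, y \notin FV(\phi)$. The core step is the equivalence
$$x \not\approx y \vdash \KMF^x(x \not\approx y \to \phi) \leftrightarrow (\B \phi \to \phi).$$
The forward direction follows from $\mathtt{K_{wh}toFS}$ with identity substitution, combined with the identity axiom $x \not\approx y \to \K(x \not\approx y)$ to collapse $\B(x \not\approx y \to \phi)$ to $\B \phi$ under the hypothesis. The backward direction uses $\mathtt{FS\&KtoK_{wh}}$ with substitution $[y/x]$: since $\vdash y \approx y$, the hypothesis $\K \psi[y/x] = \K(y \not\approx y \to \phi)$ of the rule's consequent is trivially a theorem, so the consequent collapses to $(\B \phi \to \phi) \to \KMF^x \psi$. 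Composing with $A$ and passing $\K$ through the equivalence (legitimate because $\K(x \not\approx y)$ is available under the hypothesis), I obtain $x \not\approx y \vdash (\B \phi \to \phi) \to \K(\B \phi \to \phi)$. Then $B$ follows: given $\B \phi \wedge \phi$ we have $\B \phi \to \phi$, hence $\K(\B \phi \to \phi)$; combined with the $\mathbf{S4.2}^\K$-theorem $\B \phi \to \K \B \phi$, this yields $\K \phi$.

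For $(\subseteq)$, deriving $A$ from $B$, I plan a semantic-to-syntactic argument anchored in Theorem \ref{completeness}. First I verify semantically that any $S4.2$-constant-domain model validating $B$ also validates $A$. Given $\KMF^x \phi$ at $w$: the $\K$-witness clause transfers to any $R$-successor $v$ by positive introspection of $\K$; for the FS-clause at $v$, if $\B \phi[b/x]$ at $v$, then by the $\K$-stability of $\B$ in $S4.2$, $\B \phi[b/x]$ at $w$, hence $\phi[b/x]$ at $w$ by FS at $w$, hence (using $B$ when the domain has at least two elements) $\K \phi[b/x]$ at $w$, so $\phi[b/x]$ at $v$; the singleton case is trivial as $b$ coincides with the witness. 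To lift this syntactically, I observe that the canonical-model construction in the appendix goes through unchanged for $\mathbf{S4.2}_\approx^{\KMF} + B$: being a pure axiom schema, every MCS of the augmented logic automatically contains all instances of $B$, so the canonical model validates $B$ and hence also $A$. This yields strong completeness of $\mathbf{S4.2}_\approx^{\KMF} + B$ w.r.t.\ the class of $B$-validating models, from which $\mathbf{S4.2}_\approx^{\KMF} + B \vdash A$ follows.

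The main obstacle is the $(\subseteq)$ direction. A direct syntactic derivation using only the bundled-operator rules $\mathtt{K_{wh}toK}$ and $\mathtt{FS\&KtoK_{wh}}$ would have to handle both the two-element and the singleton case uniformly, but the singleton-domain case is not expressible as a single formula of $\LL_\approx(\KMF)$. Routing through the canonical model of the augmented logic bypasses this, but requires confirming that the repair lemma and saturation steps of Theorem \ref{completeness}'s proof remain valid when $B$ is added---which is routine, since $B$ is purely axiomatic rather than a rule, and its instances are absorbed into MCSes without perturbing the repair cases.
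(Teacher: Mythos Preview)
Your $(\supseteq)$ direction is essentially the paper's argument in dual form: the paper instantiates $A$ at $\KMF^z(x \approx z \to \phi[z/x])$ with a fresh $z$, while you instantiate at $\KMF^x(x \not\approx y \to \phi)$. Both choices make the existential witness trivial and collapse the FS-condition to $\B\phi \to \phi$, and both reach $\K(\B\phi \to \phi)$ under the hypothesis $x\not\approx y$, from which $B$ follows. (One small wrinkle: binding the very $x$ that appears in the target instance of $B$ forces your side assumption $x \notin FV(\phi)$; the paper's fresh $z$ sidesteps this, though the general case follows from yours by routine renaming.)

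Your $(\subseteq)$ direction differs genuinely from the paper. The paper gives a direct syntactic derivation: it uses the formula $\KMF^y(x\not\approx y)$ as an internal case split, observing that $\vdash \neg\KMF^y(x\not\approx y) \to z \approx x$ (via rigidity of identity), so that one can derive $\KMF^x\phi \to (\B\phi \to \K\phi)$ separately under $\KMF^y(x\not\approx y)$ and under its negation, and then finish with $\mathtt{FS\&KtoK_{wh}}$ and $\mathtt{K_{wh}toK}$ to obtain $\KMF^x\phi \to \KMF^x\K\phi$, which is equivalent to $A$. So your stated obstacle---that the singleton-domain case is not expressible in $\LL_\approx(\KMF)$---is not in fact an obstacle: $\neg\KMF^y(x\not\approx y)$ expresses exactly this.

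Your completeness detour is nonetheless a valid alternative. The repair lemma and saturation use only the named rules, so adjoining an axiom schema is harmless; the resulting canonical model is $S4.2$; and since every domain element is named by a variable under $\sigma^c$ and every MCS contains all syntactic instances of $B$, the canonical model validates $B$ globally (this last inference---from truth under $\sigma^c$ of all instances to validity under arbitrary assignments---is the one step that deserves a line of justification beyond ``automatically''). What your route buys is modularity: the same template would handle other equivalences of this shape without new syntactic work. What the paper's route buys is self-containment: no need to re-verify completeness for the extended logic, and a proof that stays entirely inside the calculus.
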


\begin{proof}
We first show that $x \not \approx y \to (\B \phi \wedge \phi \to \K \phi)$ can be deduced in $\mathbf{S4.2}_\approx^{\KMF} \oplus \KMF^x \phi \to \K \KMF^x \phi$.
It is easy to check that $\vdash \phi \wedge x \not \approx y \to\KMF^z (x \approx z \to \phi[z/x])$, where $z$ is a fresh variable.
Then, by positive introspection, $\vdash \phi \wedge x \not \approx y \to \K \KMF^z (x \approx z \to \phi[z/x])$,
and by $\mathtt{K_{wh}toFS}$, 
$\vdash \phi \wedge x \not \approx y \to \K (\B (x \approx x \to \phi) \to (x \approx x \to \phi))$.
Hence, $\vdash \phi \wedge x \not \approx y \to \K (\B \phi \to \phi)$,
and thus $\vdash \phi \wedge x \not \approx y \to (\B \phi \to \K \phi)$.
Hence, $\vdash x \not \approx y \to (\B \phi \wedge \phi \to \K \phi)$.

Then, we show that $\KMF^x \phi \to \K \KMF^x \phi$ can be deduced in $\mathbf{S4.2}^{\KMF}_\approx \oplus x \not \approx y \to (\B \phi \wedge \phi \to \K \phi)$.
Equivalently, we show that $\KMF^x \phi \wedge \KMF^x \K \phi$ can be deduced.
Since $\vdash x \not \approx y \to (\B \phi \wedge \phi \to \K \phi)$ for some fresh $y$,
by $\mathtt{K_{wh}toK}$,
we have $\vdash \KWH^y(x \not \approx y) \to (\B \phi \wedge \phi \to \K \phi)$.
Then, we first show that $\vdash \KMF^x \phi \to (\B \phi \to \K \phi)$.
On the one hand, it is easy to check that we have $\vdash \neg \KMF^y (x \not \approx y) \to z \approx x$ (where $z$ is a fresh variable), and thus 
$\vdash \K \phi[z/x] \wedge \neg \KMF^y (x \not \approx y) \to (\B \phi \to \K \phi)$.
Hence, $\vdash \KMF^x \phi \wedge \neg \KMF^y (x \not \approx y) \to (\B \phi \to \K \phi)$ by $\mathtt{K_{wh}toK}$ (and $\mathtt{R}^{\KMF}$).
On the other hand, $\vdash \KMF^x \phi \wedge \KMF^y (x \not \approx y) \to (\B \phi \wedge \phi \to \K \phi)$,
and thus $\vdash \KMF^x \phi \wedge \KMF^y (x \not \approx y) \to (\B \phi \to \K \phi)$ by $\mathtt{K_{wh}toFS}$.
Hence, $\vdash \KMF^x \phi \to (\B \phi \to \K \phi)$.
Then, by $\mathtt{FS\&KtoK_{wh}}$ and $\mathtt{4}^\K$,
$\vdash \KMF^x \phi \to (\K \phi \to \KMF^x \K \phi)$,
and thus $\vdash \KMF^x \phi \to \KMF^x \K \phi$ by $\mathtt{K_{wh}toK}$.
\end{proof}

Finally, for Proposition \ref{MONO2},
we only prove the case for $\KMF$, since the case for $\TBMF$ is similar.
That is, we prove the following proposition:

\begin{proposition}
The following equivalence holds:
\begin{center}
\begin{tabular}{lcl}
$\mathbf{S4.2}_\approx^{\KMF} \oplus \mathtt{MONO}$ & $=$ & $\mathbf{S4.2}_\approx^{\KMF} \oplus x \not \approx y \to (\B \phi \to \phi)$
\end{tabular}
\end{center}
\end{proposition}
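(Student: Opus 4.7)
The plan is to prove the two inclusions separately.

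For $\supseteq$, I will derive the scheme $x \not\approx y \to (\B\phi \to \phi)$ axiomatically inside $\mathbf{S4.2}_\approx^{\KMF} \oplus \mathtt{MONO}$ in five short steps. A case analysis on $z \approx x$ — using the identity axiom $z \not\approx x \to \K(z \not\approx x)$, the inner theorem $\K\chi \to \B\chi$, and the seriality of $\B$ (from $KD45$) — first yields $\vdash \B(z \approx x) \to (z \approx x)$. Applying $\mathtt{FS\&KtoK_{wh}}$ with $\psi_0 = \top$, $n = 0$, and substitution witness $x$ (so that $\K((z \approx x)[x/z]) = \K(x \approx x)$ is already a theorem of $\mathbf{S4.2}^\K$) then produces $\vdash \KMF^z(z \approx x)$. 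Picking $z$ fresh for $FV(\phi) \cup \{x, y\}$ and applying $\mathtt{MONO}$ to the propositional tautology $(z \approx x) \to (z \approx x \vee \phi)$ gives $\vdash \KMF^z(z \approx x \vee \phi)$. An instance of $\mathtt{K_{wh}toFS}$ with substitution $[y/z]$ yields $\vdash \B(y \approx x \vee \phi) \to (y \approx x \vee \phi)$. Finally, assuming $x \not\approx y$, the identity axiom gives $\B(y \not\approx x)$, and $\B$-monotonicity converts $\B\phi$ into $\B(y \approx x \vee \phi)$; the previous step then forces $y \approx x \vee \phi$, which under $x \not\approx y$ collapses to $\phi$.

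For $\subseteq$, I will argue semantically via completeness. The key observation is that $\mathtt{MONO}$ is valid in every $S4.2$-constant-domain model $M$ validating the scheme. If $|D| \geq 2$, then for any $\phi$ I can pick fresh variables $x^*, y^* \notin FV(\phi)$ and shift any given assignment to a variant with $\sigma'(x^*) \neq \sigma'(y^*)$ without affecting the truth value of $\phi$; the scheme instance $x^* \not\approx y^* \to (\B\phi \to \phi)$ then forces $(\B\phi \to \phi)$ at the original assignment too, so the universal conjunct of $\KMF^x \cdot$ becomes automatic, the bundled operator collapses to the mention-some reading, and $\mathtt{MONO}$ follows from $\vdash \alpha \to \beta$ together with $\K$-monotonicity. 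If $|D| = 1$, the semantics of $\KMF^x \beta$ reduces to $\K\beta \wedge (\B\beta \to \beta)$, which equals $\K\beta$ by factivity of $\K$, and $\mathtt{MONO}$ again follows. To promote semantic validity to admissibility I invoke completeness for $\mathbf{S4.2}_\approx^{\KMF} \oplus \{x \not\approx y \to (\B\phi \to \phi)\}$, obtained by a routine extension of the appendix's canonical-model construction: since the scheme's instances are preserved under each MCS extension in the step-by-step construction, the canonical model validates the scheme.

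The hard part will be identifying the auxiliary formula that drives direction $\supseteq$. The choice $(z \approx x) \to (z \approx x \vee \phi)$ is carefully tailored, since its antecedent $\KMF^z(z \approx x)$ carries no FS obligation on $\phi$ — which is crucial, because FS for $\phi$ is exactly what we want to \emph{conclude}, not assume — yet after passing through $\mathtt{MONO}$ and $\mathtt{K_{wh}toFS}$ the parasitic disjunct $y \approx x$ can be cleanly eliminated under the hypothesis $x \not\approx y$, isolating the target FS condition on $\phi$ alone. Direction $\subseteq$ is conceptually easier but rests on the (straightforward but non-trivial) observation that the appendix's completeness proof extends cleanly to the scheme-augmented logic.
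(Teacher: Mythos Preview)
Your argument for the $\supseteq$ direction is essentially the paper's: both derive $\vdash \KMF^z(z\approx x)$ (the paper just asserts $\vdash \KMF^y(x\approx y)$ as ``easy to check'' where you spell out the $\mathtt{FS\&KtoK_{wh}}$ application), then push it through $\mathtt{MONO}$ along the tautology $(z\approx x)\to(z\approx x\vee\phi)$ and read off the FS instance at $y$ via $\mathtt{K_{wh}toFS}$. Minor point: your step~5 derives $\B(y\not\approx x)$ but never uses it; $\B\phi\to\B(y\approx x\vee\phi)$ is pure $\B$-monotonicity.

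For the $\subseteq$ direction you take a genuinely different route. The paper argues \emph{syntactically}: it encodes the dichotomy ``domain has $\geq 2$ elements'' by the formula $\KMF^y(x\not\approx y)$, derives $\KMF^y(x\not\approx y)\to(\B\psi\to\psi)$ from the scheme via $\mathtt{K_{wh}toK}$, and runs an in-system case split on $\KMF^y(x\not\approx y)$ versus its negation (the latter forcing $z\approx x$ for fresh $z$) to obtain $\KMF^x\phi\to(\B\psi\to\psi)$; then $\mathtt{FS\&KtoK_{wh}}$ and $\mathtt{K_{wh}toK}$ finish. Your semantic argument performs the same case split externally on $|D|$ and then pulls the conclusion back through completeness of the scheme-augmented logic. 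This is correct and arguably more transparent---it makes visible that the scheme either trivialises FS (when $|D|\geq 2$) or trivialises the quantifier (when $|D|=1$)---but it is not self-contained: you lean on a completeness theorem the paper does not state. Your justification for that extension (``the scheme's instances are preserved under each MCS extension'') is the right instinct but slightly misphrased; the real point is that the step-by-step and $FC$ constructions go through verbatim over the extended logic, and since every MCS then contains every scheme instance, the truth lemma yields validity of the scheme in the canonical model. The paper's syntactic route avoids this detour entirely, at the cost of a somewhat opaque in-system case analysis.
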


\begin{proof}
We first show that $x \not \approx y \to (\B \phi \to \phi)$ can be deduced in $\mathbf{S4.2}_\approx^{\KMF} \oplus \mathtt{MONO}$.
Clearly $\vdash (x \approx y) \wedge (x \not \approx y) \to \phi$, i.e. $\vdash x \approx y \to (x \not \approx y \to \phi)$.
Then, by $\mathtt{MONO}$, $\vdash\KMF^y(x \approx y) \to \KMF^y (x \not \approx y \to \phi)$.
It is also easy to check that $\vdash \KMF^y(x \approx y)$.
Hence, $\vdash \KMF^y (x \not \approx y \to \phi)$.
Then, by $\mathtt{K_whtoFS}$, $\vdash \B (x \approx y \vee \phi) \to (x \not \approx y \to \phi)$.
Hence, clearly $\vdash \B \phi \to (x \not \approx y \to \phi)$, i.e. $\vdash x \not \approx y \to (\B \phi \to \phi)$.

Then, we show that $\mathtt{MONO}$ is admissible in $\mathbf{S4.2}_\approx^{\KMF} \oplus x \not \approx y \to (\B \phi \to \phi)$.
Since we have $x \not \approx y \to (\B \phi \to \phi)$ for some fresh $y$,
by $\mathtt{K_{wh}toK}$, $\KMF^y (x \not \approx y) \to (\B \phi \to \phi)$.
Assume that $\vdash \phi \to \psi$.
We first prove that $\vdash \KMF^x \phi \to (\B \psi \to \psi)$.
On the one hand, $\vdash \neg \KMF^y (x \not \approx y) \to z \approx x$ (where $z$ is a fresh variable),
and thus $\vdash \K \phi[z/x] \wedge \neg \KMF^y (x \not \approx y) \to \phi$.
Then, since $\vdash \phi \to \psi$,
$\vdash \K \phi[z/x] \wedge \neg \KMF^y (x \not \approx y) \to \psi$,
and thus $\vdash \K \phi[z/x] \wedge \neg \KMF^y (x \not \approx y) \to (\B \psi \to \psi)$.
Then, by $\mathtt{K_{wh}toK}$ and $\mathtt{R}^{\KMF}$
$\vdash \KMF^x \phi \wedge \neg \KMF^y (x \not \approx y) \to (\B \psi \to \psi)$.
On the other hand, clearly $\vdash \KMF^x \phi \wedge \KMF^y (x \not \approx y) \to (\B \psi \to \psi)$.
Hence, $\vdash \KMF^x \phi \to (\B \psi \to \psi)$.
And since $\vdash \phi \to \psi$,
we also have $\vdash \K \phi \to \K \psi$.
Hence, by $\mathtt{FS\&KtoK_{wh}}$,
$\vdash \KMF^x \phi \wedge \K \phi \to \KMF^x \psi$,
and thus $\vdash \KMF^x \phi \to \KMF^x \psi$ by $\mathtt{K_{wh}toK}$.
\end{proof}

\end{document}